%

\documentclass[aap,MSNbibl,seceqn,dvips]{arximspdf}
\usepackage{breakurl}
%

\doi{10.1214/12-AAP914} 
\volume{24}
\issue{1}
\pubyear{2014}
\firstpage{54}
\lastpage{75}

\makeatletter

\newtheorem{teor}{Theorem}[section]
\newtheorem{prop}[teor]{Proposition}
\newtheorem{cor}[teor]{Corollary}
\newtheorem{lema}[teor]{Lemma}

\newproclaim{example}[teor]{Example}
\newproclaim{defi}[teor]{Definition}
\newproclaim{rem}[teor]{Remark}

\makeatother

\begin{document}
\begin{frontmatter}

\title{The fundamental theorem of asset pricing, the~hedging problem
and maximal claims in financial markets with short sales prohibitions}
\runtitle{FTAP and hedging without short-selling}

\begin{aug}
\author[A]{\fnms{Sergio} \snm{Pulido}\corref{}\ead[label=e1]{spulido@andrew.cmu.edu}}
\runauthor{S. Pulido}
\affiliation{Carnegie Mellon University}
\address[A]{Department of Mathematical Sciences\\
Carnegie Mellon University\\
5000 Forbes Avenue\\
Pittsburgh, Pennsylvania 15213-3890\\
USA\\
\printead{e1}} 
\end{aug}

\received{\smonth{12} \syear{2011}}
\revised{\smonth{8} \syear{2012}}

%
\begin{abstract}
This paper consists of two parts. In the first part we prove the
fundamental theorem of asset pricing under short sales prohibitions in
continuous-time financial models where asset prices are driven by
nonnegative, locally bounded semimartingales. A key step in this proof
is an extension of a well-known result of Ansel and Stricker. In the
second part we study the hedging problem in these models and connect it
to a properly defined property of ``maximality'' of contingent claims.
\end{abstract}

%
\begin{keyword}[class=AMS]
\kwd{60H05}
\kwd{60H30}
\end{keyword}
\begin{keyword}
\kwd{Fundamental theorem of asset pricing}
\kwd{hedging problem}
\kwd{maximal claims}
\kwd{supermartingale measures}
\kwd{short sales prohibition}
\end{keyword}

\end{frontmatter}

\section{Introduction}\label{sec1}
The practice of short selling is alleged to magnify the decline of
asset prices. As a result, short sales bans and restrictions have been
commonly used as a regulatory measure to stabilize prices during
downturns in the economy. The most notable recent examples are: (i) in
August of 2011, the European Securities and Markets Authority curtailed
short sales in France, Belgium, Italy and Spain in an effort to stop
the tailspin in the markets caused by the European debt crisis (see
\cite{esma}); (ii) in September of 2008, after the burst of the housing
bubble, the U.S. Securities and Exchange Commission (SEC) prohibited
short selling for 797 financial companies in an effort to stabilize
those companies (see \mbox{\cite{beberpagano,boehmer}}); (iii) at the same
time, in September of 2008, the U.K. Financial Services Authority (FSA)
prohibited short selling for 32 financial companies (see
\cite{beberpagano,boehmer}).

Short sales prohibitions, however, are seen not only after the burst of
a price bubble or during times of financial stress. In certain cases,
the inability to short sell is inherent to the specific market. There
are over 150 stock markets worldwide, many of which are in the third
world. In most of the third world emerging markets the practice of
short selling is not allowed; see \cite{bris,daouk}.
Additionally, in markets such as commodity markets and the housing
market, primary securities such as mortgages cannot be sold short
because they cannot be borrowed.

This paper aims to understand the consequences of short sales
prohibition in semimartingale financial models. The fundamental theorem
of asset pricing establishes the equivalence between the absence of
arbitrage, a key concept in mathematical finance, and the existence of
a probability measure under which the asset prices in the market have a
characteristic behavior. In Section \ref{sec3}, we prove the fundamental theorem
of asset pricing in continuous time financial models with short sales
prohibition where prices are driven by locally bounded semimartingales.
This extends related results by Jouini and Kallal in \cite
{jouinikallal}, Sch\"urger in \cite{schurger}, Frittelli in~\cite
{frittelli}, Pham and Touzi in \cite{phamtouzi}, Napp in \cite{napp}
and more recently by Karatzas and Kardaras in \cite{karatzaskardaras}
to the framework of the seminal work of Delbaen and Schachermayer in
\cite{delbaenmain}.

Additionally, the hedging problem of contingent claims in
markets with convex portfolio constraints where prices are driven by
diffusions and discrete processes has been extensively studied; see
\cite{cvitanickaratzas}, Chapter 5 of \cite{KS} and Chapter 9 of
\cite
{schiedfollmer}. In Section \ref{sec4}, inspired by the works of Jacka in \cite
{jacka} and Ansel and Stricker in \cite{anselstricker}, and using ideas
from \cite{follmerkramkov}, we extend some of these classical results
to more general semimartingale financial models. We also reveal an
interesting financial connection to the concept of maximal claims,
first introduced by Delbaen and Schachermayer in \cite{delbaenmain} and
\cite{delbaennumeraire}.

\section{The set-up}\label{sec2}\label{sectionsetup}

\subsection{The financial market}\label{sec2.1}
We focus our analysis on a finite time trading horizon $[0,T]$ and
assume that there are $N$ risky assets trading in the market. We
suppose, as in the seminal work of Delbaen and Schachermayer in \cite
{delbaenmain}, that the price processes of the $N$ risky assets are
nonnegative locally bounded $P$-semimartingales over a stochastic basis
$(\Omega,\mathcal{F},\mathbb{F},P)$, where $\mathbb{F}:=(\mathcal
{F}_t)_{0\leq t\leq T}$ satisfies the usual hypotheses. We let
$S:=(S^i)_{1\leq i\leq N}$ be the $\mathbb{R}^N$-valued stochastic
process representing the prices of the risky assets. We assume without
loss of generality that the spot interest rates are constant and equal
to 0, that is, the price processes are already discounted. We also
assume that the risky assets have no cash flows associated to them, and
there are no transaction costs.

The probability measure $P$ denotes our reference probability measure.
We suppose that $\mathcal{F}_0$ is $P$-trivial and $\mathcal
{F}_T=\mathcal{F}$. Hence, all random variables measurable with respect
to $\mathcal{F}_0$ are $P$-almost surely constant and there is no
additional source of randomness on the probability space other than the
one specified by the filtration $\mathbb{F}$. As usual, we identify
random variables that are equal $P$-almost surely. If $X$ is a
semimartingale over this stochastic basis, we denote by $L(X)$ the
space of predictable processes integrable with respect to $X$. Given
$H\in L(X)$, $H\cdot X$ denotes the stochastic integral of $H$ with
respect to $X$; see page 165 of \cite{protter}. If $t\in[0,T]$, we let
$\Delta X_t=X_t-X_{t-}$ be the jump of $X$ at time $t$, with the
convention that $X_{0-}=0$. If $\tau$ is a stopping time, $X^{\tau
}:=X_{\cdot\wedge\tau}$ denotes the process $X$ stopped at $\tau$.
Given two semimartingales $X,Y$ we denote by $[X,Y]$ the quadratic
covariation of $X$ and $Y$; see page 66 of \cite{protter}. Given a
probability measure $Q$ equivalent to $P$, denoted by $Q\sim P$, we let
$L^0(Q)$, $L^0_+(Q)$, $L^{\infty}(Q)$, $L^{\infty}_+(Q)$ and $L^1(Q)$
be the spaces of equivalent classes of real-valued random variables,
nonnegative random variables, $Q$-essentially bounded random variables,
nonnegative $Q$-essentially bounded random variables and $Q$-integrable
random variables, respectively. For a measure $Q\sim P$ and a random
variable $f$ bounded from below, we let $E^Q[f]$, $E^Q[f|\mathcal
{F}_t]$ be the expectation with respect to $Q$ and the conditional
expectation with respect to $Q$ given $\mathcal{F}_t$, respectively.
Finally, $\mathcal{H}^1(Q)$ denotes the space of martingales $X$ such
that $E^Q[[X,X]_T^{1/2}]<\infty$.

\subsection{The trading strategies}\label{sec2.2}
We fix $0\leq d\leq N$ and assume that the first $d$ risky assets can
be sold short in an admissible fashion to be specified below and that
the last $N-d$ risky assets cannot be sold short under any
circumstances. This leads us to define the set of admissible strategies
in the market as follows.
%
\begin{defi}\label{admissiblestrategies}
A vector valued process $H=(H^1,\ldots,H^N)$, where for $1\leq i\leq N$
and $t\in[0,T]$, $H^i_t$ denotes the number of shares of asset $i$ held
at time $t$, is called an \textit{admissible trading strategy} if:
\begin{longlist}[(iii)]
\item[(i)] $H\in L(S)$;
\item[(ii)] $H_0=0$;
\item[(iii)] $(H\cdot S)\geq-\alpha$ for some $\alpha> 0$;
\item[(iv)] $H^i\geq0$ for all $i>d$.
\end{longlist}
We let $\mathcal{A}$ be the set of admissible trading strategies.
\end{defi}
Hence, by condition (ii), we assume that the initial risky assets'
holdings are always equal to 0 and therefore initial endowments are
always in num\'eraire denomination. Condition (iii) above is usually
called the \textit{admissibility condition} and restricts the agents'
strategies to those whose value is uniformly bounded from below over
time. The only sources of friction in our market come from conditions
(iii) and (iv) above. For every admissible strategy $H\in\mathcal{A}$
we define the optional process $H^0$ by
%
\begin{equation}
\label{selffinancing}H^0:=(H\cdot S)-\sum
_{i=1}^NH^iS^i.
\end{equation}
If $H^0$ denotes the balance in the money market account, then the
strategy $\overline{H}=(H^0,H)$ is \textit{self-financing} with initial
value 0.

\subsection{No arbitrage conditions}\label{sec2.3}
In \cite{delbaenmain} and \cite{delbaenunbounded}, Delbaen and
Schachermayer considered the no arbitrage paradigm known as no free
lunch with vanishing risk (NFLVR) and proved the fundamental theorem of
asset pricing (FTAP) under this framework. Below we will redefine the
(NFLVR) condition in our context.

Define the following cones in $L^0(P)$:
%
\begin{eqnarray}
\label{setk}
\mathcal{K}:\!&=&\bigl\{(H\cdot S)_T\dvtx H\in\mathcal{A}\bigr
\},
\\
\label{setc}
\mathcal{C}:\!&=&\bigl(\mathcal{K}-L_+^0(P)\bigr)\cap
L^{\infty}(P)
\nonumber\\[-8pt]\\[-8pt]
&=&\bigl\{g\in L^{\infty}(P)\dvtx g=f-h\mbox{ for some $f\in K$ and $h\in
L^0_+(P)$}\bigr\}.\nonumber
\end{eqnarray}
The cone $\mathcal{K}$ corresponds to the cone of random variables that
can be obtained as payoffs of admissible strategies with zero initial
endowment. The cone $\mathcal{C}$ is the cone of random variables that
are $P$-almost surely bounded and are dominated from above by an
element of $\mathcal{K}$. These sets of random variables are cones and
not subspaces of $L^0(P)$ due to conditions (iii) and (iv) in
Definition \ref{admissiblestrategies}. We define in our market the
following ``no arbitrage'' type conditions.
%
\begin{defi}
We say that the financial market satisfies the condition of \textit{no
arbitrage under short sales prohibition} (\textit{NA-S}) if
\[
\mathcal{C}\cap L_+^{\infty}(P)=\{0\}.
\]
\end{defi}
In order to prove the (FTAP), the condition of (NA-S) has to be modified.

\begin{defi}
We say that the financial market satisfies the condition of \textit{no
free lunch with vanishing risk under short sales prohibition} (\textit{NFLVR-S}) if
\[
\overline{\mathcal{C}}\cap L_+^{\infty}(P)=\{0\},
\]
where the closure above is taken with respect to the $\|\cdot\|_{\infty}$
norm on $L^{\infty}(P)$.
\end{defi}
%
\begin{rem}\label{remonnflvr}
Observe that (NFLVR-S) does not hold if and only if there exists a
sequence $(^nH)$ in $\mathcal{A}$, a sequence of bounded random
variables $(f_n)$ and a bounded random variable $f$ measurable with
respect to $\mathcal{F}$ such that $(^{n}H\cdot S)_T\geq f_n$ for all
$n$, $f_n$ converges to $f$ in $L^{\infty}(P)$, $P(f\geq0)=1$ and $P(f>0)>0$.
\end{rem}
In the next section we prove the (FTAP) in our context. This theorem
establishes a relationship between the (NFLVR-S) condition defined
above and the existence of a measure, usually known as the risk neutral
measure, under which the price processes behave in a particular way.

\section{The fundamental theorem of asset pricing}\label{sec3}
The results presented in this section are a combination of the results
obtained by Frittelli in \cite{frittelli} for simple predictable
strategies in markets under convex constraints, and the extension of
the classical theorem of Delbaen and Schachermayer (see \cite
{delbaenmain}) to markets with convex cone constraints established by
Kabanov in \cite{kabanov}. The characterization of \mbox{(NFLVR-S)} is in
accordance with the (FTAP) as proven in \cite{jouinikallal} by Jouini
and Kallal, who assumed that $S_t$ is square integrable under $P$ for
all times $t$ and considered simple predictable strategies.

\subsection{The set of risk neutral measures}\label{sec3.1}
We first define our set of \textit{risk neutral measures}.
%
\begin{defi}\label{esmm}
We let $\mathcal{M}_{\mathrm{sup}}(S)$ be the set of probability measures $Q$ on
$(\Omega,\mathcal{F})$ such that:
\begin{longlist}[(ii)]
\item[(i)] $Q\sim P$ and
\item[(ii)] for $1\leq i\leq d$, $S^i$ is a $Q$-local martingale and,
for $d<i\leq N$, $S^i$ is a $Q$-supermartingale.
\end{longlist}
We will call the set $\mathcal{M}_{\mathrm{sup}}(S)$ the set of \textit{risk
neutral measures} or \textit{equivalent supermartingale measures} (\textit{ESMM}).
\end{defi}
The following proposition plays a crucial role in the analysis below.
%
\begin{prop}\label{mainprop}
Let $\mathcal{C}$ be as in (\ref{setc}). Then
\[
\mathcal{M}_{\mathrm{sup}}(S)=\Bigl\{Q\sim P\dvtx  \sup_{f\in\mathcal{C}}E^Q[f]=0
\Bigr\}.
\]
\end{prop}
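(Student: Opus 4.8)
The plan is to prove the two inclusions separately. The statement identifies the set of equivalent supermartingale measures with the set of measures $Q \sim P$ under which every element of $\mathcal{C}$ has nonpositive expectation and the supremum is actually attained at $0$. Note that $0 \in \mathcal{C}$ always (take $H = 0$), so the supremum being $\leq 0$ is equivalent to its being $= 0$; hence the real content is the characterization $\mathcal{M}_{sup}(S) = \{Q \sim P : E^Q[f] \leq 0 \text{ for all } f \in \mathcal{C}\}$.

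First I would prove the inclusion $\mathcal{M}_{sup}(S) \subseteq \{Q \sim P : \sup_{f \in \mathcal{C}} E^Q[f] = 0\}$. Fix $Q \in \mathcal{M}_{sup}(S)$ and an admissible $H \in \mathcal{A}$. The key observation is that the admissibility condition (iii), $(H \cdot S) \geq -\alpha$, together with the sign constraint (iv), $H^i \geq 0$ for $i > d$, forces $(H \cdot S)$ to be a $Q$-supermartingale. Indeed, for the freely traded assets $i \leq d$ the process $S^i$ is a $Q$-local martingale, and the stochastic integral of a predictable process against a local martingale that is bounded below is itself a local martingale bounded below, hence (being bounded below by a constant) a supermartingale by Fatou's lemma along a localizing sequence. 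For the short-sales-constrained assets $i > d$, the integrand $H^i$ is nonnegative and $S^i$ is a $Q$-supermartingale, so the integral contributes a further nonincreasing-in-expectation term; this is precisely the step where the extension of the Ansel--Stricker result alluded to in the introduction is needed to guarantee that the integral against a supermartingale, under the one-sided integrand constraint and the lower bound, remains a supermartingale. Granting this, $E^Q[(H \cdot S)_T] \leq (H \cdot S)_0 = 0$. Since any $f \in \mathcal{C}$ satisfies $f \leq (H \cdot S)_T$ for some $H \in \mathcal{A}$ and $f$ is bounded, monotonicity of $E^Q$ gives $E^Q[f] \leq 0$, and equality $\sup = 0$ follows from $0 \in \mathcal{C}$.

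For the reverse inclusion, suppose $Q \sim P$ satisfies $E^Q[f] \leq 0$ for all $f \in \mathcal{C}$. I must recover the two martingale-type conditions in Definition \ref{esmm}. The strategy is to feed suitably chosen admissible strategies into the hypothesis. For the supermartingale property of $S^i$ with $i > d$: given $0 \leq s < t \leq T$ and a set $A \in \mathcal{F}_s$, consider the strategy that holds one unit of asset $i$ on the stochastic interval $\dbraoc{(s,t] \times A}$, i.e.\ $H = \indic_A \indic_{(s,t]} \mathbf{e}_i$. Since $S^i$ is locally bounded and nonnegative this strategy is admissible (the integrand is nonnegative, respecting (iv), and after a stopping-time localization the integral is bounded below), and $(H \cdot S)_T = \indic_A(S^i_t - S^i_s)$; the inequality $E^Q[\indic_A(S^i_t - S^i_s)] \leq 0$ for all $A \in \mathcal{F}_s$ is exactly the supermartingale inequality $E^Q[S^i_t \mid \mathcal{F}_s] \leq S^i_s$. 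For the freely traded assets $i \leq d$ one runs the same argument with both $H = \pm \indic_A \indic_{(s,t]} \mathbf{e}_i$, since short selling is admissible there, obtaining both inequalities and hence the martingale equality; the local-martingale (rather than true-martingale) conclusion then comes from the standard localization by stopping times that keep $S^i$ bounded, which legitimizes the strategies and upgrades the conclusion to a local-martingale statement. Integrability of $S^i_t$ under $Q$ is handled along the same localizing sequence.

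The main obstacle I anticipate is the first inclusion, specifically justifying rigorously that $(H \cdot S)$ is a $Q$-supermartingale for a general admissible $H \in L(S)$ rather than a simple or bounded integrand. The naive argument that ``integral of a predictable process against a (local) martingale is a local martingale'' is false without integrability hypotheses, and the lower bound from admissibility is what rescues it; combining this with the one-sided constraint $H^i \geq 0$ against a mere supermartingale $S^i$ for $i>d$ is exactly the delicate point, which is why the introduction flags an \emph{extension of the result of Ansel and Stricker} as the crucial technical ingredient. I expect the honest proof to invoke that extension as a lemma. The reverse inclusion, by contrast, is comparatively routine once the admissibility of the indicator-type strategies and the localization are set up carefully.
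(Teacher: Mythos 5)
Your proposal is correct and follows essentially the same route as the paper: the first inclusion is delegated to the supermartingale property of admissible wealth processes (the paper's Lemma \ref{lemma2}, which rests on its extension of Ansel--Stricker, Proposition \ref{extofanselandstricker}), and the second inclusion uses exactly the paper's localized buy-and-hold strategies $H^i=\pm 1_A 1_{(s\wedge\sigma_n,\,t\wedge\sigma_n]}$, with the sign $-$ allowed only for $i\leq d$. One small point to tighten: for $i>d$ this argument directly yields only that each stopped process $S^i_{\cdot\wedge\sigma_n}$ is a $Q$-supermartingale, and the upgrade to the true supermartingale property (including integrability of $S^i_t$) required by Definition \ref{esmm} comes from the nonnegativity of $S^i$ together with Fatou's lemma, not from the localizing sequence alone.
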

To prove this proposition we need the following results.
%
\begin{lema}\label{lemma1}
Suppose that $Q$ is a probability measure on $(\Omega,\mathcal{F})$.
Let $V$ be an $\mathbb{R}^N$-valued $Q$-semimartingale such that $V^i$
is $Q$-local supermartingale for $i>d$, and $V^i$ is a $Q$-local
martingale for $i\leq d$. Let $H$ be an $\mathbb{R}^N$-valued bounded
predictable process, such that $H^i\geq0$ for $i>d$. Then $(H\cdot V)$
is a $Q$-local supermartingale.
\end{lema}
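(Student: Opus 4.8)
The statement is that if each $V^i$ is a $Q$-local martingale for $i\le d$ and a $Q$-local supermartingale for $i>d$, and $H$ is bounded predictable with $H^i\ge 0$ for $i>d$, then $(H\cdot V)$ is a $Q$-local supermartingale. The plan is to reduce the general local statement to the elementary pointwise fact that a nonnegative (bounded) predictable integrand against a supermartingale increment, or an arbitrary bounded integrand against a martingale increment, preserves the supermartingale inequality, and then patch these together through a careful localization argument. Throughout I work under $Q$.

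First I would split $V=M+A$ by writing each component in its canonical-type decomposition relative to its local (super)martingale property. For $i\le d$, $V^i$ is a local martingale, so there $A^i\equiv 0$. For $i>d$, since $V^i$ is a local supermartingale it admits (after a localizing sequence) a decomposition $V^i = M^i - B^i$ with $M^i$ a local martingale and $B^i$ a nondecreasing predictable (or at least nondecreasing adapted) process with $B^i_0=0$; this is the content of the local supermartingale decomposition. Then $(H\cdot V) = (H\cdot M) - (H\cdot B)$, where $M=(M^i)_i$ collects the local-martingale parts. The term $(H\cdot M)$ is a stochastic integral of a bounded predictable process against a local martingale, hence is itself a $Q$-local martingale by standard stochastic-integration theory (boundedness of $H$ guarantees integrability along a localizing sequence). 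The term $(H\cdot B) = \sum_{i>d}\int H^i\, \ud B^i$ is, because $H^i\ge 0$ and $B^i$ is nondecreasing, a nondecreasing process starting at $0$. Subtracting a nondecreasing process from a local martingale yields a local supermartingale, which is exactly the conclusion.

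The main obstacle is making the localization uniform across the two steps and ensuring the stochastic-integral manipulations are legitimate. The decompositions $V^i=M^i-B^i$ hold only after localizing, and the standard integration theory for $(H\cdot M)$ requires its own localizing sequence; I would intersect these stopping times to obtain a single sequence $\tau_n\uparrow\infty$ along which every object is a genuine martingale or a genuine nondecreasing integrable process. Care is needed because $B^i$ need not be locally bounded a priori, so I would localize $B^i$ by its own jumps or by bounding $\int_0^{\cdot}\ud B^i$, and I would invoke that $H$ bounded ensures $H^i\,\ud B^i$ is dominated by a constant multiple of $\ud B^i$, preserving integrability. The key inequality, once localization is in place, is the pointwise supermartingale inequality: for $s\le t$ and the stopped objects, $\expecq\bra{(H\cdot V)^{\tau_n}_t - (H\cdot V)^{\tau_n}_s \mid \F_s} = \expecq\bra{-(H\cdot B)^{\tau_n}_t + (H\cdot B)^{\tau_n}_s \mid \F_s}\le 0$, since $(H\cdot M)^{\tau_n}$ contributes zero conditional increment and $(H\cdot B)^{\tau_n}$ is nondecreasing. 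Passing to the limit $n\to\infty$ recovers the local supermartingale property for $(H\cdot V)$.
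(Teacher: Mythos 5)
Your proposal is correct and follows essentially the same route as the paper: decompose $V$ via Doob--Meyer into a local-martingale part minus a nondecreasing predictable part, use boundedness of $H$ to conclude $(H\cdot M)$ is a $Q$-local martingale, use $H^i\geq 0$ for $i>d$ to conclude the finite-variation integral is nondecreasing, and localize. The paper streamlines the bookkeeping by first reducing (by stopping) to the case where $V^i$ is a genuine supermartingale for $i>d$ and by citing Proposition 2 of Jacod's work on vector stochastic integration to justify that $H\in L(M)\cap L(A)$ and $(H\cdot V)=(H\cdot M)-(H\cdot A)$, which is exactly the step your intersected-localizing-sequences argument handles by hand.
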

\begin{pf}
Without loss of generality we can assume that, under $Q$, $V^i$ is a
supermartingale for $i>d$. Suppose that for $i>d$, $V^i=M^i-A^i$ is the
Doob--Meyer decomposition of the $Q$-supermartingale $V^i$, with $M^i$
a $Q$-local martingale and $A^i$ a predictable nondecreasing process
such that $A^i_0=0$. Let $M^i=V^i$ and $A^i=0$ for $i\leq d$. Then
$V=M-A$, with $M=(M^1,\ldots, M^N)$ and $A=(A^1,\ldots,A^N)$, is the
canonical decomposition of the special vector valued semimartingale $V$
under $Q$. Since $H$ is bounded, $(H\cdot V)$ is a $Q$-special
semimartingale, $H\in L(M)\cap L(A)$, $(H\cdot V)=(H\cdot M)-(H\cdot
A)$ and $(H\cdot M)$ is a $Q$-local\vadjust{\goodbreak} martingale; see Proposition 2 in
\cite{jacodintegration}. Additionally, since $H^i\geq0$ for $i>d$ we
have that $(H\cdot A)$ is a nondecreasing process starting at 0. We
conclude then that $(H\cdot V)$ is a $Q$-local supermartingale.
\end{pf}
The following lemma is a known result of stochastic analysis which we
present here for completion.
%
\begin{lema}\label{thespaceH1}
Suppose that $H$ is a bounded predictable process, and $X\in\mathcal
{H}^1(Q)$ is a real-valued martingale. Then $H\cdot X$ is also in
$\mathcal{H}^1(Q)$. In particular, $H\cdot X$ is a $Q$-martingale.
\end{lema}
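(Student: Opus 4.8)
The plan is to use the characterization of the Hardy space $\mathcal{H}^1(Q)$ through the quadratic variation and to exploit the boundedness of $H$ in a direct domination estimate. Recall that for a real-valued $Q$-local martingale $Y$ one may take as norm $\|Y\|_{\mathcal{H}^1(Q)} = E^Q\big[[Y,Y]_T^{1/2}\big]$, and that $Y \in \mathcal{H}^1(Q)$ precisely when this quantity is finite. Since $H$ is bounded and predictable, it is integrable with respect to every semimartingale, so $H \cdot X$ is well defined; and being the stochastic integral of a bounded predictable process against a local martingale, it is itself a $Q$-local martingale. This disposes of the a priori integrability and measurability issues and lets me work directly at the level of quadratic variations.

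The key identity is the behavior of the quadratic variation under stochastic integration, namely $[H \cdot X, H \cdot X] = \int H^2 \, d[X,X]$. Writing $C := \|H\|_\infty < \infty$, monotonicity of the Stieltjes integral against the nondecreasing process $[X,X]$ gives the pathwise bound $[H \cdot X, H \cdot X]_T = \int_0^T H_s^2 \, d[X,X]_s \leq C^2\, [X,X]_T$. Taking square roots and then $Q$-expectations yields $\|H \cdot X\|_{\mathcal{H}^1(Q)} = E^Q\big[[H \cdot X, H \cdot X]_T^{1/2}\big] \leq C\, E^Q\big[[X,X]_T^{1/2}\big] = C\, \|X\|_{\mathcal{H}^1(Q)} < \infty$, since $X \in \mathcal{H}^1(Q)$ by hypothesis. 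Hence $H \cdot X \in \mathcal{H}^1(Q)$.

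For the final assertion, I would invoke the standard fact that every element of $\mathcal{H}^1(Q)$ is a uniformly integrable martingale: membership in $\mathcal{H}^1(Q)$ forces $E^Q[(H \cdot X)^*_T] < \infty$ for the maximal function $(H \cdot X)^*_T = \sup_{t \leq T} |(H \cdot X)_t|$ (equivalently, through the Burkholder--Davis--Gundy inequalities), so the maximal function dominates the process and the local martingale $H \cdot X$ is in fact a genuine $Q$-martingale. I do not expect a serious obstacle here; the only points requiring care are confirming that $H \cdot X$ is a priori well defined as a local martingale (handled by boundedness of $H$) and citing the correct equivalence of the $\mathcal{H}^1$ norms. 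Everything else reduces to the one-line domination argument above.
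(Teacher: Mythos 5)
Your proof is correct and takes essentially the same approach as the paper: the paper's ``proof'' is only a pointer to the method of Emery's inequality (Theorem V-3 in Protter), and your argument --- the identity $[H\cdot X,H\cdot X]=\int H^2\,d[X,X]$, the pathwise bound by $\|H\|_\infty^2\,[X,X]_T$, and the BDG/maximal-function step upgrading the local martingale to a true martingale --- is exactly that method specialized to the case $p=r=1$, $q=\infty$ with no finite-variation part. Nothing is missing; you have simply written out the details the paper omits.
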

\begin{pf}
The argument to prove this result is analogous to the one used in the
proof of Emery's inequality (see Theorem V-3 in \cite{protter}) and we
do not include its proof in this paper.
\end{pf}
The next proposition is a key step in the extension of the (FTAP) to
markets with short sales prohibition and prices driven by arbitrary
locally bounded semimartingales. It extends a well-known result by
Ansel and Stricker; see Proposition~3.3 in~\cite{anselstricker}.
%
\begin{prop}\label{extofanselandstricker}
Let $Q\in\mathcal{M}_{\mathrm{sup}}(S)$ and $H\in L(S)$ be such that $H^i\geq0$
for $i>d$. Then $H\cdot S$ is a $Q$-local supermartingale if and only
if there exists a sequence of stopping times $(T_n)_{n\geq1}$ that
increases $Q$-almost surely to $\infty$ and a sequence of nonpositive
random variables $\Theta_n$ in $L^1(Q)$ such that $\Delta(H\cdot
S)^{T_n}\geq\Theta_n$ for all $n$.
\end{prop}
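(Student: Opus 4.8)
My plan is to reduce the statement to the classical Ansel--Stricker result (Proposition 3.3 in \cite{anselstricker}) by splitting off the predictable finite-variation part generated by the supermartingale components of $S$. Since $Q\in\mathcal{M}_{sup}(S)$, I would first record the canonical decomposition $S=M-A$ exactly as in the proof of Lemma \ref{lemma1}: $M=(M^1,\dots,M^N)$ is a vector $Q$-local martingale, $A^i=0$ for $i\le d$, and $A^i$ is predictable nondecreasing with $A^i_0=0$ for $i>d$. Because $\Delta S=\Delta M-\Delta A$ holds pathwise, the jump relation $H\Delta S=H\Delta M-H\Delta A$ holds pointwise for any integrand, and since $H^i\ge 0$ and $\Delta A^i\ge 0$ for $i>d$ this yields the pointwise domination $H\Delta M=H\Delta S+\sum_{i>d}H^i\Delta A^i\ge H\Delta S=\Delta(H\cdot S)$, which will be the engine of the argument.

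For the ``only if'' direction I would argue directly and without appeal to the decomposition. If $H\cdot S$ is a $Q$-local supermartingale then it is a special semimartingale, so the process $\sup_{s\le\cdot}\abs{\Delta(H\cdot S)_s}$ is $Q$-locally integrable. Choosing a localizing sequence $(T_n)$ with $\Phi_n:=\sup_{s\le T_n}\abs{\Delta(H\cdot S)_s}\in L^1(Q)$ and setting $\Theta_n:=-\Phi_n\le 0$ gives $\Delta(H\cdot S)^{T_n}=H*\Delta S^{T_n}\ge\Theta_n$ for all $n$, as required.

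The ``if'' direction is where the real work lies, and I would proceed in three steps. First, set $\phi:=(1+\sum_{i}\abs{H^i})^{-1}$, a $(0,1]$-valued predictable process for which $\phi H$ is bounded with $(\phi H)^i\ge 0$ for $i>d$; by Lemma \ref{lemma1}, $(\phi H)\cdot S$ is a $Q$-local supermartingale, and since $\phi H$ is bounded its canonical decomposition is $(\phi H)\cdot S=(\phi H)\cdot M-(\phi H)\cdot A$ with $(\phi H)\cdot M$ a $Q$-local martingale and $(\phi H)\cdot A=\sum_{i>d}(\phi H^i)\cdot A^i$ nondecreasing. Consequently $H\cdot S=\frac{1}{\phi}\cdot\big((\phi H)\cdot S\big)$ exhibits $H\cdot S$ as a $Q$-$\sigma$-supermartingale, and $H\cdot M:=\frac{1}{\phi}\cdot\big((\phi H)\cdot M\big)$ is a well-defined $Q$-$\sigma$-martingale with $\Delta(H\cdot M)=H\Delta M$. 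Second, combining the hypothesis $H*\Delta S^{T_n}\ge\Theta_n$ with the pointwise domination above yields $H\Delta M\,\indic_{[0,T_n]}\ge H\Delta S\,\indic_{[0,T_n]}\ge\Theta_n$, so the negative jumps of the $\sigma$-martingale $H\cdot M$ are $Q$-locally integrable; by Proposition 3.3 in \cite{anselstricker}, $H\cdot M$ is in fact a genuine $Q$-local martingale and $H\in L(M)$. Third, it then follows that $H\cdot A=H\cdot M-H\cdot S$ is a well-defined predictable nondecreasing process, so that $H\cdot S=H\cdot M-H\cdot A$ is a $Q$-local martingale minus a nondecreasing process, i.e.\ a $Q$-local supermartingale.

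The main obstacle is precisely the passage in the second step from the weak a priori integrability $H\in L(S)$ to the genuine decomposition $H\cdot S=H\cdot M-H\cdot A$ with $H\in L(M)\cap L(A)$: membership in $L(S)$ alone does not force this, and without the jump hypothesis $H\cdot S$ could be a strict $\sigma$-supermartingale rather than a local supermartingale (the same gap that Ansel and Stricker close for $\sigma$-martingales). The jump condition is exactly what rules this out, and the delicate point is to transport their argument through the auxiliary $\sigma$-martingale $H\cdot M$ while carrying the nondecreasing part $A$ along; verifying that the localized jump bound on $H\cdot S$ transfers to $H\cdot M$ through the pointwise inequality $H\Delta M\ge H\Delta S$ is the key mechanism that makes the reduction to the classical result legitimate.
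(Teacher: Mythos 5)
Your ``only if'' direction is fine, and is actually more elementary than the paper's: you use that a special semimartingale has locally integrable jump supremum and localize that supremum directly, whereas the paper goes through Proposition 2 of \cite{jacodintegration} and Proposition 3.3 of \cite{anselstricker} and then corrects by $(H\cdot A)_{T_n}$. The problem lies entirely in your ``if'' direction.

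The gap is at the point where you set $H\cdot M:=\frac{1}{\phi}\cdot\big((\phi H)\cdot M\big)$ and call it a well-defined $\sigma$-martingale. For that stochastic integral to exist you need $\frac{1}{\phi}\in L\big((\phi H)\cdot M\big)$, and by the associativity of stochastic integration (since $\frac{1}{\phi}\,\phi H=H$) this is \emph{equivalent} to $H\in L(M)$ --- precisely the integrability that, as you yourself observe, does not follow from $H\in L(S)$. So the construction presupposes the crux instead of proving it. The subsequent citation of Proposition 3.3 of \cite{anselstricker} to conclude that ``$H\cdot M$ is a genuine local martingale \emph{and} $H\in L(M)$'' misreads that result: Ansel--Stricker take $H\in L(M)$ as a standing hypothesis and characterize when the already well-defined integral $H\cdot M$ is a local martingale. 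No jump condition can ever produce $H\in L(M)$: if $M$ is continuous and $H$ is any predictable process not in $L(M)$, then $H\Delta M\equiv 0$ and your jump condition holds vacuously with $\Theta_n=0$. This is exactly why the paper does not work with the canonical decomposition $S=M-A$ in this direction. Instead it invokes Proposition 3 of \cite{jacodintegration}, which removes the large jumps of both $S$ and $H\cdot S$ via the process $U$ and produces a decomposition $S=N+V$, $V=B+U$, whose whole point is the built-in integrability $H\in L(N)\cap L(V)$ together with $H\cdot N$ being a local martingale; the supermartingale property of $H\cdot V$ is then obtained by redoing the Ansel--Stricker truncation and dominated-convergence argument with the processes $H^{\alpha}=H1_{\{|H|\leq\alpha\}}$, not by citing their proposition as a black box. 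To repair your proof you would either have to follow that route, or else prove the statement ``a $\sigma$-supermartingale satisfying the Ansel--Stricker jump condition is a local supermartingale'' (your step one does show $H\cdot S$ is a $\sigma$-supermartingale, since $\phi\cdot(H\cdot S)=(\phi H)\cdot S$ is a local supermartingale by Lemma \ref{lemma1}); but that statement is essentially the proposition itself, in the spirit of \cite{kallsen}, and still requires the argument the paper gives.
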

\begin{pf}
($\Leftarrow$) It is enough to show that for all $n$, $(H\cdot
S)^{T_n}$ is a $Q$-local supermartingale. Hence, without loss of
generality we can assume that
\[
\Delta(H\cdot S)\geq\Theta
\]
with $\Theta\in L^1(Q)$ a nonpositive random variable. By Proposition 3
in \cite{jacodintegration}, if we define
\[
U_t=\sum_{s\leq t}1_{\{|\Delta S_s|>1\ \mathrm{or}\ |\Delta(H\cdot
S)_s|>1\}
}\Delta
S_s,
\]
there exist a $Q$-local martingale $N$ and a predictable process of
finite variation $B$ such that $H\in L(N)\cap L(B+U)$, $Y:=S-U$ is a
$Q$-special semimartingale with bounded jumps and canonical
decomposition $Y=N+B$ and $H\cdot N$ is a $Q$-local martingale. Let
$V:=B+U$ and $H^{\alpha}:=H1_{\{|H|\leq\alpha\}}$ for $\alpha\geq0$.
We have that $Q\in\mathcal{M}_{\mathrm{sup}}(S)$, $N$ is a $Q$-local martingale
and $V=S-N$. This implies that $V^i$ is a $Q$-local supermartingale for
$i>d$, and $V^i$ is a $Q$-local martingale for $i\leq d$. We can
further assume by localization that $N^i\in\mathcal{H}^1(Q)$ for all
$i\leq N$ and that $V$ has canonical decomposition $V=M-A$, where $M^i$
in $\mathcal{H}^1(Q)$ and $A^i\geq0$ is $Q$-integrable, predictable
and nondecreasing for all $i\leq N$; see Theorem \mbox{IV-51} in~\cite
{protter}. By Lemmas \ref{lemma1} and \ref{thespaceH1}, these
assumptions imply that for all $\alpha\geq0$, $H^{\alpha}\cdot N$ and
$H^{\alpha}\cdot M$ are $Q$-martingales and $H^{\alpha}\cdot V$ is a
$Q$-supermartingale. In particular for all stopping times $\tau\leq T$,
$E^Q[(H^{\alpha}\cdot N)_{\tau}]=0$ and $E^Q[(H^{\alpha}\cdot
V)_{\tau
}]\leq0$. This implies that for all stopping times $\tau\leq T$,
$E^Q[|(H\cdot N)_{\tau}|]=2E^Q[(H\cdot N)_{\tau}^-]$ and
$E^Q[|(H\cdot
V)_{\tau}|]\leq2E^Q[(H\cdot V)_{\tau}^-]$. After these observations,
by following the same argument as the one given in the proof of
Proposition 3.3 in \cite{anselstricker}, we find a sequence of stopping
times $(\tau_p)_{p\geq0}$ increasing to $\infty$ such that
$E^Q[|H\cdot V|_{\tau_p}]\leq12 p+ 4E^Q[|\Theta|]$ and, for all
$\alpha
\geq0$,
$|(H^{\alpha}\cdot V)^{\tau_p}|\leq4p+|H\cdot V|_{\tau_p}.$ An
application of the dominated convergence theorem yields that $(H\cdot
V)^{\tau_p}$ is a $Q$-supermartingale for all $p\geq0$. Since $H\cdot
S=H\cdot N+H\cdot V$ and $(H\cdot N)$ is a $Q$-local martingale, we
conclude that $(H\cdot S)$ is a $Q$-local supermartingale.

($\Rightarrow$) The $Q$-local supermartingale $H\cdot S$ is special. By
Proposition 2 in \cite{jacodintegration}, if $S=M-A$ is the canonical
decomposition of $S$ with respect to $Q$, where $M^i$ is a $Q$-local
martingale, $A_0=0$ and $A^i$ is an nondecreasing, predictable and
$Q$-locally integrable process for all $i\leq N$, then $H\cdot S=H\cdot
M-H\cdot A$ is the canonical decomposition of $H\cdot S$, where $H\cdot
M$ is a $Q$-local martingale and $H\cdot A$ is nondecreasing,
predictable and $Q$-locally integrable. By Proposition 3.3 in \cite
{anselstricker} we can find a sequence of stopping times
$(T_{n})_{n\geq0}$ that increases\vspace*{1pt} to $\infty$ and a sequence of
nonpositive random variables $(\tilde{\Theta}_{n})$ in $L^1(Q)$ such
that
\[
\Delta(H\cdot M)^{T_n}\geq\tilde{\Theta}_{n}.
\]
We can further assume without loss of generality that $(H\cdot
A)_{T_n}\in L^1(Q)$ for all $n$. By taking $\Theta_n=\tilde{\Theta
}_{n}-(H\cdot A)_{T_n}$, we conclude that for all $n$
\[
\Delta(H\cdot S)^{T_n}= \Delta(H\cdot M)^{T_n}-\Delta(H\cdot
A)^{T_n}\geq\tilde{\Theta}_{n}-(H\cdot A)^{T_n}\geq
\Theta_{n}.
\]
\upqed\end{pf}
%
\begin{lema}\label{lemma2}
Let $Q\in\mathcal{M}_{\mathrm{sup}}(S)$ and $H\in\mathcal{A}$; see Definitions
\ref{admissiblestrategies} and \ref{esmm}. Then $(H\cdot S)$ is a
$Q$-supermartingale. In particular $(H\cdot S)_T\in L^1(Q)$ and
$E^Q[(H\cdot S)_T]\leq0$.
\end{lema}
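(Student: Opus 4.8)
The plan is to argue in two stages: first that $H\cdot S$ is a $Q$-local supermartingale, via the extension of Ansel--Stricker in Proposition~\ref{extofanselandstricker}, and then to promote this to a genuine $Q$-supermartingale using the uniform lower bound built into the admissibility condition (iii).

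For the first stage I would verify the jump hypothesis of Proposition~\ref{extofanselandstricker}. Since $H\in\mathcal{A}$ we have $H^i\geq 0$ for $i>d$, and there is $\alpha>0$ with $(H\cdot S)\geq -\alpha$, so the proposition applies as soon as the localizing sequence is produced. On the finite horizon $[0,T]$ the process $H\cdot S$ is \cadlag and therefore has bounded paths, so the stopping times
\[T_n\dfn\inf\{t\in[0,T]:(H\cdot S)_t>n\}\qquad(\inf\emptyset=\infty)\]
increase $Q$-almost surely to $\infty$. The point is that for every $s\leq T_n$ one has $(H\cdot S)_{s^-}\leq n$, so, combining this with $(H\cdot S)_s\geq-\alpha$,
\[\Delta(H\cdot S)^{T_n}_s=(H*\Delta S^{T_n})_s=1_{\{s\leq T_n\}}\big((H\cdot S)_s-(H\cdot S)_{s^-}\big)\geq -(\alpha+n).\]
Hence $\Delta(H\cdot S)^{T_n}\geq\Theta_n$ with $\Theta_n\dfn-(\alpha+n)$ a nonpositive constant, trivially in $L^1(Q)$. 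Proposition~\ref{extofanselandstricker} then yields that $H\cdot S$ is a $Q$-local supermartingale.

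For the second stage I would show that any $Q$-local supermartingale bounded below by a constant is a $Q$-supermartingale. Let $(R_k)$ localize $H\cdot S$, so each $(H\cdot S)^{R_k}$ is a $Q$-supermartingale and, for $0\leq u\leq t\leq T$, $E^Q[(H\cdot S)_{t\wedge R_k}\mid\mathcal{F}_u]\leq(H\cdot S)_{u\wedge R_k}$. Since $(H\cdot S)_{t\wedge R_k}\geq-\alpha$ and $(H\cdot S)_{t\wedge R_k}\to(H\cdot S)_t$ $Q$-almost surely, the conditional Fatou lemma gives
\[E^Q[(H\cdot S)_t\mid\mathcal{F}_u]\leq\liminf_{k}E^Q[(H\cdot S)_{t\wedge R_k}\mid\mathcal{F}_u]\leq\liminf_k(H\cdot S)_{u\wedge R_k}=(H\cdot S)_u,\]
which is the supermartingale inequality (meaningful because $(H\cdot S)_t\geq-\alpha$ makes the conditional expectation well defined). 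Integrability is then automatic: taking $u=0$ and using the $P$-triviality of $\mathcal{F}_0$ gives $E^Q[(H\cdot S)_t]\leq(H\cdot S)_0=0$, whence $E^Q[(H\cdot S)_t^+]\leq\alpha$ and $E^Q[|(H\cdot S)_t|]\leq 2\alpha<\infty$. In particular $(H\cdot S)_T\in L^1(Q)$ and $E^Q[(H\cdot S)_T]\leq 0$.

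The hard part is the first stage: admissibility bounds the \emph{level} of $H\cdot S$ from below by $-\alpha$ but says nothing directly about the magnitude of its downward jumps, so one cannot hope to dominate $\Delta(H\cdot S)$ by an integrable variable globally. Stopping at the first passage of $H\cdot S$ above level $n$ is exactly the device that converts the level bound into a constant, hence integrable, lower bound on the jumps of the stopped process, which is precisely the input required by Proposition~\ref{extofanselandstricker}. Once the local supermartingale property is secured, the Fatou argument promoting it to a true supermartingale is routine.
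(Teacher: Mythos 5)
Your proof is correct and takes essentially the same route as the paper's: stop $H\cdot S$ at its first passage above a level to convert the admissibility bound $(H\cdot S)\geq-\alpha$ into a constant (hence integrable) lower bound on the jumps of the stopped process, invoke Proposition~\ref{extofanselandstricker} to get the $Q$-local supermartingale property, and then upgrade to a genuine $Q$-supermartingale by Fatou's lemma using the uniform lower bound. The only differences are cosmetic: the paper stops at level $q-\alpha$ (yielding jump bound $-q$) where you stop at level $n$ (yielding $-(\alpha+n)$), and you spell out the conditional Fatou step and the resulting integrability that the paper leaves implicit.
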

\begin{pf}
Assume that $(H\cdot S)\geq-\alpha$, with $\alpha\geq0$. Let $q\geq
0$ be arbitrary. If we define $T_q=\inf\{t\geq0\dvtx  (H\cdot S)_t\geq
q-\alpha\}$, we have that $\Delta(H\cdot S)^{T_q}\geq-q$. By
Proposition \ref{extofanselandstricker} we conclude that $(H\cdot S)$
is a $Q$-local supermartingale bounded from below. By Fatou's lemma we
obtain that $(H\cdot S)$ is a $Q$-supermartingale as we wanted to prove.
\end{pf}
%
\begin{rem}
The statement of Lemma \ref{lemma2} corresponds to Lemma 2.2 and
Proposition 3.1 in \cite{kallsen}. Here we have proved this result by
methods similar to the ones appearing in the original proof of Ansel
and Stricker in \cite{anselstricker}. Additionally, we have given
sufficient and necessary conditions for the $\sigma$-supermartingale
property (see Definition 2.1 in \cite{kallsen}) to hold.
\end{rem}
We are now ready to prove the main proposition of this section. The
arguments below essentially correspond to those presented in
\cite{delbaenmain,kabanov} and \cite{karatzaskardaras}. We include them
here for completeness.
\begin{pf*}{Proof of Proposition \ref{mainprop}} By Lemma \ref{lemma2}
\[
\mathcal{M}_{\mathrm{sup}}(S)\subset\Bigl\{Q\sim P\dvtx \sup_{f\in\mathcal{C}}E^Q[f]=0
\Bigr\}.
\]
Now suppose that $Q$ is a probability measure equivalent to $P$ such
that $E^Q[f]\leq0$ for all $f\in\mathcal{C}$. Fix $1\leq i\leq N$.
Since $S^i$ is locally bounded, there exists a sequence of stopping
times $(\sigma_n)$ increasing to $\infty$ such that $S^i_{\cdot
\wedge
\sigma_n}$ is bounded. Let $0\leq s<t\leq T$, $A\in\mathcal{F}_s$ and
$n\geq0$ be arbitrary. Consider the process $H^i(r,\omega
)=1_{A}(\omega
)1_{(s\wedge\sigma_n,t\wedge\sigma_n]}(r)$. Let $H^j\equiv0$ for
$j\neq i$. We have that $H=(H_1,\ldots,H_N)\in\mathcal{A}$, $(H\cdot
S)_T\in\mathcal{C}$ and
\[
0\geq E^Q\bigl[(H\cdot S)_T\bigr]=E^Q
\bigl[1_{A}\bigl(S^i_{t\wedge\sigma_n}-S^i_{s\wedge
\sigma_n}
\bigr)\bigr].
\]
This implies that $S^i_{\cdot\wedge\sigma_n}$ is a $Q$-supermartingale
for all $n$ and $S^i$ is a $Q$-local supermartingale. Since $S^i$ is
nonnegative, by Fatou's lemma we conclude that $S^i$ is a
$Q$-supermartingale. For $1\leq i\leq d$ we can apply the same argument
to the process $H^i(r,\omega)=-1_{A}(\omega)1_{(s\wedge\sigma
_n,t\wedge
\sigma_n]}(r)$ to conclude that $S^i$ is a $Q$-local martingale. Hence
\[
\mathcal{M}_{\mathrm{sup}}(S)\supset\Bigl\{Q\sim P\dvtx \sup_{f\in\mathcal{C}}E^Q[f]=0
\Bigr\},
\]
and the proposition follows.
\end{pf*}
We have seen in the proof of this proposition that the following
equality holds.
%
\begin{cor}\label{thesetps}
Let $\mathcal{M}_{\mathrm{sup}}(S)$ be as in Definition \ref{esmm}. Then
%
\begin{equation}
\label{supermeasures}\qquad\mathcal{M}_{\mathrm{sup}}(S)=\bigl\{Q\sim P\mbox
{: $(H\cdot
S)$ is a $Q$-supermartingale for all $H\in\mathcal{A}$}\bigr\}.
\end{equation}
\end{cor}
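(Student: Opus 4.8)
The plan is to read off this equality directly from the work already done in proving Proposition \ref{mainprop}, rather than reproving it from scratch. The key observation is that the two inclusions in Proposition \ref{mainprop} were established through an intermediate characterization involving the supermartingale property of the wealth processes $(H\cdot S)$, and it is exactly this intermediate object that Corollary \ref{thesetps} isolates.

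First I would prove the inclusion ``$\subset$''. Let $Q\in\mathcal{M}_{sup}(S)$ and let $H\in\mathcal{A}$ be arbitrary. By Lemma \ref{lemma2}, $(H\cdot S)$ is a $Q$-supermartingale. Since $H$ was an arbitrary admissible strategy, $Q$ belongs to the right-hand set of $\eqref{supermeasures}$. This gives
\[
\mathcal{M}_{sup}(S)\subset\{Q\sim P: \text{$(H\cdot S)$ is a $Q$-supermartingale for all $H\in\mathcal{A}$}\}.
\]

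For the reverse inclusion ``$\supset$'', suppose $Q\sim P$ is such that $(H\cdot S)$ is a $Q$-supermartingale for every $H\in\mathcal{A}$. I then need to verify the two requirements in Definition \ref{esmm}, namely that $S^i$ is a $Q$-local martingale for $i\leq d$ and a $Q$-supermartingale for $d<i\leq N$. This is achieved by inspecting the second half of the proof of Proposition \ref{mainprop}: for each fixed $i$ and each localizing stopping time $\sigma_n$ making $S^i$ bounded, the strategy $H$ supported on the single coordinate $i$ via $H^i(r,\omega)=1_A(\omega)1_{(s\wedge\sigma_n,\,t\wedge\sigma_n]}(r)$ lies in $\mathcal{A}$, so the assumed supermartingale property of $(H\cdot S)$ yields $E^Q[1_A(S^i_{t\wedge\sigma_n}-S^i_{s\wedge\sigma_n})]\leq 0$ for all $A\in\mathcal{F}_s$. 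By the argument given there this makes $S^i$ a $Q$-local supermartingale, and nonnegativity together with Fatou's lemma upgrades it to a $Q$-supermartingale. For $i\leq d$ the negated strategy $H^i(r,\omega)=-1_A(\omega)1_{(s\wedge\sigma_n,\,t\wedge\sigma_n]}(r)$ is also admissible (short selling is permitted in these coordinates and the admissibility condition (iii) is met because $S^i$ is bounded on each localizing interval), and running the same argument with both signs forces the increments to have zero conditional expectation, so $S^i$ is a $Q$-local martingale. Hence $Q\in\mathcal{M}_{sup}(S)$.

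The only point requiring a little care, and the place I would check most carefully, is the admissibility of the localizing test strategies: condition (iii) of Definition \ref{admissiblestrategies} demands a uniform lower bound on $(H\cdot S)$, and this is precisely why the boundedness of $S^i_{\cdot\wedge\sigma_n}$ obtained from local boundedness is essential. Granting that, there is no real obstacle, since every ingredient has already been assembled inside the proof of Proposition \ref{mainprop}; the corollary is in effect a restatement of what that proof demonstrates, and I would simply observe that the equivalences established there do not use the full strength of the condition $\sup_{f\in\mathcal{C}}E^Q[f]=0$ but only the supermartingale property of the processes $(H\cdot S)$.
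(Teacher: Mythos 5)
Your proof is correct and takes essentially the same route as the paper: the paper's own justification is precisely that the equality was already established in proving Proposition \ref{mainprop}, with the inclusion ``$\subset$'' given by Lemma \ref{lemma2} and the inclusion ``$\supset$'' given by the buy-and-hold test strategies $H^i(r,\omega)=\pm 1_A(\omega)1_{(s\wedge\sigma_n,\,t\wedge\sigma_n]}(r)$, exactly as you spell out. Your point about checking admissibility condition (iii) of Definition \ref{admissiblestrategies} via the boundedness of the stopped prices is the same observation implicit in the paper's argument.
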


\subsection{The main theorem}\label{sec3.2}
%
\begin{teor}\label{maintheorem}
(NFLVR-S) $\Leftrightarrow\mathcal{M}_{\mathrm{sup}}(S)\neq\varnothing$.
\end{teor}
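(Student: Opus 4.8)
The plan is to prove the two implications of Theorem \ref{maintheorem} separately, with the forward direction (NFLVR) $\Rightarrow \mathcal{M}_{sup}(S) \neq \emptyset$ carrying essentially all of the difficulty. Throughout I would lean on Proposition \ref{mainprop}, which reduces the existence of a risk neutral measure to the existence of $Q \sim P$ with $E^Q[f] \leq 0$ for every $f \in \mathcal{C}$.

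For the easy implication, suppose $Q \in \mathcal{M}_{sup}(S)$. By Proposition \ref{mainprop} we have $E^Q[f] \leq 0$ for every $f \in \mathcal{C}$. The linear functional $f \mapsto E^Q[f]$ is $\|\cdot\|_\infty$-continuous on $L^\infty(P)$, since $|E^Q[f]| \leq \|f\|_\infty$, so the inequality $E^Q[f] \leq 0$ persists on the norm closure $\overline{\mathcal{C}}$. If $f \in \overline{\mathcal{C}} \cap L^\infty_+(P)$, then $f \geq 0$ forces $E^Q[f] \geq 0$, whence $E^Q[f] = 0$; since $Q \sim P$ this gives $f = 0$ $P$-almost surely. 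Thus $\overline{\mathcal{C}} \cap L^\infty_+(P) = \{0\}$, i.e. (NFLVR) holds.

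For the hard implication I would follow the Delbaen--Schachermayer scheme adapted to the cone constraint (iv) of Definition \ref{admissiblestrategies}, in the spirit of Kabanov's treatment of convex cone constraints. Assume (NFLVR). It trivially implies (NA), so $\mathcal{C} \cap L^\infty_+(P) = \{0\}$. The crucial step is to upgrade this to weak-$\ast$ (i.e. $\sigma(L^\infty, L^1)$) closedness of the convex cone $\mathcal{C}$. By the Krein--Smulian theorem it suffices to show that $\mathcal{C} \cap \{\|f\|_\infty \leq 1\}$ is weak-$\ast$ closed, and this in turn reduces to an $L^0$-closedness statement for the set of bounded-below attainable payoffs. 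Here one takes a sequence $g_n = (H_n \cdot S)_T - h_n \in \mathcal{C}$ with $H_n \in \mathcal{A}$ converging weak-$\ast$ (hence, after the standard reductions, in probability), applies a Koml\'os / forward-convex-combination argument to pass to a limit, and verifies that the limiting payoff can again be dominated by $(H \cdot S)_T$ for some $H \in \mathcal{A}$. The delicate point, and the one I expect to be the main obstacle, is that the sign constraints $H^i \geq 0$ for $i > d$ must survive these limiting operations; convex combinations respect the cone, but one still needs the $L^0$-boundedness of the relevant set (a consequence of (NFLVR), cf. Remark \ref{remonnflvr}) together with the closedness results for constrained stochastic integrals. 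Proposition \ref{extofanselandstricker} and Lemma \ref{lemma2} are precisely what make the cone of admissible integrals well behaved under any $Q \in \mathcal{M}_{sup}(S)$, and they feed into this closedness argument.

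Once $\mathcal{C}$ is known to be a weak-$\ast$ closed convex cone with $\mathcal{C} \cap L^\infty_+(P) = \{0\}$, I would conclude by a Kreps--Yan type separation. For each nonzero $g \in L^\infty_+(P)$, the Hahn--Banach theorem in $(L^\infty(P), \sigma(L^\infty, L^1))$ separates $g$ from $\mathcal{C}$ by an element of the dual $L^1(P)$; since $\mathcal{C}$ is a cone this yields $Z_g \in L^1(P)$, $Z_g \geq 0$, with $E[Z_g f] \leq 0$ for all $f \in \mathcal{C}$ and $E[Z_g g] > 0$. An exhaustion argument over a suitable countable family produces a single strictly positive $Z \in L^1(P)$ with $E[Zf] \leq 0$ for all $f \in \mathcal{C}$. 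Setting $dQ/dP = Z/E[Z]$ gives $Q \sim P$ with $\sup_{f \in \mathcal{C}} E^Q[f] = 0$, so $Q \in \mathcal{M}_{sup}(S)$ by Proposition \ref{mainprop}, and therefore $\mathcal{M}_{sup}(S) \neq \emptyset$.
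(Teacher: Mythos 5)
Your easy direction is fine: it is exactly Proposition \ref{mainprop} plus the $\|\cdot\|_\infty$-continuity of $f\mapsto E^Q[f]$, and it matches what the paper gets from Lemma \ref{lemma2}. Your hard direction also has the right overall architecture --- Krein--Smulian, an $L^0$/Fatou-closedness statement for $\mathcal{C}$, Kreps--Yan separation, and finally Proposition \ref{mainprop} to recognize the separating measure as an element of $\mathcal{M}_{sup}(S)$. This is the Delbaen--Schachermayer/Kabanov scheme that also underlies the paper; the difference is that the paper does not redo that machinery, but instead verifies the two hypotheses of Kabanov's abstract theorem (Theorem 1.2 in \cite{kabanov}): (a) the family $\{(H\cdot S):H\in\mathcal{A},\ (H\cdot S)\geq -1\}$ is closed in the semimartingale topology (Lemma \ref{lemmapointwiseconstraints}, which rests on \cite{schweizer} and on the fact that the short-sales constraint set is a closed convex \emph{polyhedral} cone), and (b) the concatenation property: for nonnegative bounded predictable $K_1,K_2$ with $K_1K_2=0$, the process $K_1\cdot(H_1\cdot S)+K_2\cdot(H_2\cdot S)$, if it stays above $-1$, is again in the family, by associativity of the stochastic integral.

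Measured against this, your proposal has a genuine gap at precisely the point you call ``the main obstacle,'' and the ingredients you cite to close it are the wrong ones. Proposition \ref{extofanselandstricker} and Lemma \ref{lemma2} are statements about the behavior of constrained integrals \emph{under a measure} $Q\in\mathcal{M}_{sup}(S)$; in the hard direction no such measure is available --- its existence is exactly what is being proved --- so appealing to them inside the closedness argument is circular. The closedness of the constrained cone must be established under $P$ alone, which is what Lemma \ref{lemmapointwiseconstraints} supplies and what you never invoke. Moreover, the truly delicate issue is not that the sign constraints $H^i\geq 0$, $i>d$, survive convex combinations (they do, trivially, since the constraint is a pointwise convex cone); it is that a limit in probability of stochastic integrals is again a stochastic integral \emph{with an integrand satisfying the same constraint}. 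Even in the unconstrained case this is the hardest part of \cite{delbaenmain}, requiring maximality constructions and completeness results for spaces of stochastic integrals; in the constrained case it is exactly where the polyhedral structure of the short-sales cone enters --- for general closed convex cone constraints pointwise closedness can fail, which is why \cite{karatzaskardaras} work modulo zero-value strategies (cf.\ Remark \ref{polyhedralconstraints}). To turn your sketch into a proof you must either reproduce that machinery under $P$, or do what the paper does: check the concatenation and semimartingale-topology closedness hypotheses and then cite Kabanov's Theorem 1.2 together with Proposition \ref{mainprop}.
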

In order to prove this theorem we need the following lemma.
%
\begin{lema}\label{lemmapointwiseconstraints}
$\{(H\cdot S)\dvtx  H\in\mathcal{A}, (H\cdot S)\geq-1\}$ is a closed subset
of the space of vector valued $P$-semimartingales on $[0,T]$ with the
semimartingale topology given by the quasinorm
%
\begin{equation}
\label{quasinorm}D(X)=\sup\bigl\{E^P\bigl[1\wedge\bigl|(H\cdot
X)_T\bigr|\bigr]\mbox{: $H$ predictable and $|H|\leq1$}\bigr\}.
\end{equation}
\end{lema}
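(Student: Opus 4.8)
The plan is to prove sequential closedness directly. Let $(H_n)$ be a sequence in $\mathcal{A}$ with $(H_n \cdot S) \geq -1$ for every $n$, and suppose the integrals $(H_n \cdot S)$ converge in the semimartingale topology to a semimartingale $X$, i.e. $D(H_n\cdot S - X)\to 0$. I must produce $H\in\mathcal{A}$ with $(H\cdot S)\geq -1$ and $X = H\cdot S$. The basic backbone is the theorem of M\'emin that the set of stochastic integrals $\{K\cdot S : K\in L(S)\}$ is closed in the space of $P$-semimartingales endowed with the topology induced by the quasinorm $D$. Applying it yields a predictable $S$-integrable process $H^0$ with $X = H^0\cdot S$; replacing $H^0$ by $H^0\,\mathbb{I}_{(0,T]}$ we may assume its value at $0$ is $0$, so conditions (i) and (ii) of Definition \ref{admissiblestrategies} hold for this representative.

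Next I would settle the lower bound, which simultaneously delivers the admissibility condition (iii) with $\alpha = 1$ and the constraint $X\geq -1$. Convergence in the semimartingale topology is stronger than uniform convergence in probability on $[0,T]$ (the quasinorm $D$ dominates the ucp distance, as seen by testing against simple integrands of modulus at most $1$). Hence, along a subsequence, $\sup_{t\le T}|(H_n\cdot S)_t - X_t|\to 0$ almost surely. Since $(H_n\cdot S)_t \geq -1$ for every $n$ and $t$, the almost sure limit satisfies $X_t \geq -1$ for all $t$, so $X\geq -1$.

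The crux is condition (iv): the integrand $H^0$ furnished by M\'emin's theorem need not satisfy $H^{0,i}\geq 0$ for $i>d$, since the integrand representing a given integral is not unique. The main obstacle is thus to exhibit a representative of $X$ lying in the constraint cone $\mathcal{P} := \{x\in\mathbb{R}^N : x^i\geq 0 \text{ for } i>d\}$. I would handle this at the level of the integrands rather than the integrals. Convergence of $(H_n\cdot S)$ in the semimartingale topology allows one to pass from $(H_n)$ to forward convex combinations $\wt{H}_n\in\mathrm{conv}(H_n, H_{n+1},\ldots)$; their integrals $\wt{H}_n\cdot S$ are convex combinations of the $H_k\cdot S$ and hence still converge to $X$, while the integrands $\wt{H}_n$ can be arranged to converge, modulo the null sets of a controlling measure $\mu_S$ for $S$ on the predictable $\sigma$-field, to a predictable process $H$ with $H\cdot S = X$. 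Because $\mathcal{P}$ is a closed convex cone and every $H_n$ takes values in $\mathcal{P}$, so does each $\wt{H}_n$ and so does the $\mu_S$-a.e. limit $H$; thus $H^i\geq 0$ for $i>d$. This $H$ satisfies all four conditions and represents $X$, which proves closedness. I expect the delicate point to be precisely the passage from semimartingale convergence of the integrals to $\mu_S$-a.e. convergence of (convex combinations of) the integrands: this requires a careful choice of the controlling measure together with an $L^0(\mu_S)$ compactness argument of Koml\'os type, and it is exactly the step where the local boundedness of $S$ is used.
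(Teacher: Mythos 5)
Your skeleton is sound in its first two steps: M\'emin's closedness theorem for unconstrained stochastic integrals does produce some representative $H^0$ with $X=H^0\cdot S$, and the bound $X\geq -1$ does follow from the fact that the Emery topology dominates ucp convergence, hence almost sure uniform convergence along a subsequence. You have also correctly located the crux in condition (iv). But the step you defer to the end --- producing forward convex combinations of the constrained integrands that converge $\mu_S$-a.e.\ to an integrand of $X$ --- is not a technical detail to be closed by ``a Koml\'os argument with a careful choice of controlling measure''; it is where the proof breaks. Convergence of $H_n\cdot S$ in the semimartingale topology gives no control on the components of $H_n$ lying in the directions in which $S$ is degenerate (the null space of its predictable characteristics): on that part the $H_n$ can be unbounded in $L^0(\mu_S)$, and then no convex combinations of the $H_n$ themselves converge. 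The standard remedy is to replace each $H_n(t,\omega)$ by its projection onto the orthogonal complement of that null space; this leaves the integrals unchanged and restores the $L^0$-boundedness needed for a Koml\'os-type argument. However, the projected integrands no longer take values in the constraint cone $\mathcal{P}=\mathbb{R}^d\times\mathbb{R}^{N-d}_+$: they take values in the image of $\mathcal{P}$ under a (random, predictable) linear projection, and the a.e.\ limit is only guaranteed to lie in the closure of that image. To lift the limit back to a $\mathcal{P}$-valued integrand with the same stochastic integral, you need that image to be closed. Linear images of closed convex cones are in general \emph{not} closed; linear images of polyhedral cones are. This is precisely where the polyhedral structure of $\mathcal{P}$ must enter, and your argument never uses it.

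The omission is not cosmetic, because the statement is false at the level of generality your argument actually invokes (closed convex cone constraints): the reference \cite{schweizer} that the paper quotes exhibits a non-polyhedral closed convex cone in $\mathbb{R}^3$ and a degenerate martingale $S$ for which the set of constrained stochastic integrals fails to be closed in the semimartingale topology, the failure occurring exactly because the projection of the cone along the degenerate direction of $S$ is not closed. Any correct proof must therefore use polyhedrality (for instance, that linear images of polyhedral cones are polyhedral, hence closed, so that limits in the projected constraint set can be lifted back to $\mathcal{P}$). This is also why the paper's own proof is a single line: it observes that $\left\{x\in\mathbb{R}^N : x^i\geq 0 \text{ for } i>d\right\}$ is a closed convex \emph{polyhedral} cone and invokes the closedness theorem of \cite{schweizer}. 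A final, smaller point: local boundedness of $S$ is not what rescues the Koml\'os step --- M\'emin's theorem and the results of \cite{schweizer} hold for general semimartingales --- so your closing diagnosis misidentifies the structural input; the essential hypothesis is polyhedrality of the constraint set, not local boundedness of the price process.
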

\begin{pf}
Since $ \{\overrightarrow{x}\in\mathbb{R}^N\dvtx  x^i\geq0$ for
$i>d\}$ is a closed convex polyhedral cone in $\mathbb{R}^N$,
this result follows from the considerations made in \cite{schweizer}.
\end{pf}
%
\begin{rem}\label{polyhedralconstraints}
Notice that for the conclusion of Lemma \ref{lemmapointwiseconstraints}
to hold, it is important to work with short sales constraints as
explained in Definition \ref{admissiblestrategies}. In order to
consider general convex cone constraints an alternative approach is to
consider constrained portfolios modulus those strategies with zero
value. This is the approach taken in \cite{karatzaskardaras}. In our
particular case, and as it is pointed out in \cite{schweizer}, we have
the advantage of considering portfolio constraints defined pointwise
for $(\omega,t)$ in $\Omega\times[0,T]$. Given a particular strategy,
it is easier to verify admissibility when pointwise restrictions are considered.
\end{rem}
\begin{pf*}{Proof of Theorem \ref{maintheorem}}
If $K_1$ and $K_2$ are nonnegative bounded predictable processes,
$K_1K_2=0$, $H_1,H_2\in\mathcal{A}$ are such that $(H_1\cdot
S),(H_2\cdot S)\geq-1$, and $X:=K_1\cdot(H_1\cdot S)+K_2\cdot
(H_2\cdot S)\geq-1$, then associativity of the stochastic integral
implies that $X\in\{(H\cdot S)\dvtx  H\in\mathcal{A}, (H\cdot S)\geq-1\}$.
This fact, Proposition~\ref{mainprop}, Lemma \ref
{lemmapointwiseconstraints} and Theorem 1.2 in \cite{kabanov} imply
that (NFLVR-S) is equivalent to existence of a measure $Q\in\mathcal
{M}_{\mathrm{sup}}(S)$.
\end{pf*}
%
\begin{rem}\label{importantremark}
By using the results obtained by Kabanov in \cite{kabanov}, Karat\-zas
and Kardaras in \cite{karatzaskardaras} proved that the condition of
$(\mathit{NFLVR})$, with predictable convex portfolio restrictions, is
equivalent to the existence of a measure under which the value
processes of admissible strategies are supermartingales. In their work
the set of measures on the right-hand side of equation (\ref
{supermeasures}) is also referred to as the set of equivalent
supermartingale measures. As mentioned in Remark \ref
{polyhedralconstraints}, they considered convex portfolio constraints
modulus strategies with zero value. We have shown that in the special
case of short sales prohibition one can consider pointwise portfolio
restrictions. More importantly, we have shown that in the case of short
sales prohibition, the set of measures under which the values of
admissible portfolios are supermartingales is precisely the set of
measures under which the prices of the assets that cannot be sold short
are supermartingales, and the prices of assets that can be admissibly
sold short are local martingales; see Corollary \ref{thesetps}. This
provides a more precise characterization of the set of risk neutral
measures under short sales prohibition. Given a particular model, this
characterization simplifies the process of verifying that the model is
consistent with the condition of (NFLVR-S).
\end{rem}
This section demonstrates that the results obtained by Jouini and
Kallal in \cite{jouinikallal}, Sch\"urger in \cite{schurger}, Frittelli
in \cite{frittelli}, Pham and Touzi in \cite{phamtouzi} and Napp in
\cite{napp}, can be extended to a more general class of models, similar
to the ones used by Delbaen and Schachermayer in \cite{delbaenmain}. It
is also clear from this characterization that the prices of the risky
assets that cannot be sold short could be above their risk-neutral
expectations at maturity time, because the condition of (NFLVR-S) only
guarantees the existence of an equivalent supermartingale measure for
those prices.

\section{The hedging problem and maximal claims}\label{sec4}
In this section we seek to understand the scope of the effects of short
sales prohibition on the hedging problem of arbitrary contingent
claims. We study, in financial markets with short sales prohibitions
where prices are driven by nonnegative locally bounded semimartingales,
the space of contingent claims that can be super-replicated and
perfectly replicated. The duality type results presented in this
section are robust because they characterize the claims that can be
perfectly replicated or super-replicated in markets with prohibition on
short-selling without relying on particular assumptions on the dynamics
of the asset prices, other than the locally bounded semimartingale
property. By using the results of F\"ollmer and Kramkov in \cite
{follmerkramkov} we extend the classic results of Ansel and Stricker in
\cite{anselstricker}. The results presented also extend those in
Chapter 5 of \cite{KS} and Chapter 9 of \cite{schiedfollmer} to more
general semimartingale financial markets. Additionally, we establish,
in our context, a connection to the concept of maximal claims as it was
first introduced by Delbaen and Schachermayer in \cite{delbaenmain} and
\cite{delbaennumeraire}. The (FTAP) (Theorem \ref{maintheorem}) can be
generalized to the case of special convex cone portfolio constraints
(see Theorem 4.4 in \cite{karatzaskardaras}), and some of the results
presented in this section could be extended to this framework. In our
study, we specialize to short sales prohibition because in this case
the examples are simplified by the fact that the set of risk neutral
measures is characterized by the behavior of the underlying price
processes, rather than the behavior of the value processes of the
trading strategies; see Remark \ref{importantremark}. Additionally, in
this case, the portfolio restrictions can be considered pointwise in
$\Omega\times[0,T]$; see Remarks \ref{polyhedralconstraints} and
\ref
{importantremark}. A~related study on the implications of short sales
prohibitions on hedging strategies involving futures contracts can be
found in \cite{pulidojarrowprotter}. We will use the same notation as
described in Section \ref{sectionsetup}. We will denote by $\mathcal
{M}_{\mathrm{loc}}(S)$ the set of measures equivalent to $P$ under which the
components of $S$ are local martingales.

\subsection{The hedging problem}\label{sec4.1}
This section shows how the results obtained by F\"ollmer and Kramkov in
\cite{follmerkramkov} extend the usual characterization of attainable
claims and claims that can be super-replicated to markets with short
sales prohibition. These results extend those presented in Chapter 5 of
\cite{KS} and Chapter 9 of \cite{schiedfollmer} to more general
semimartingale financial models. We will assume that the condition of
(NFLVR-S) (see Theorem \ref{maintheorem}) holds. Recent works (see,
e.g., \cite{platen} and \cite{ruf}) have shown that in order to
find suitable trading strategies the condition of (NFLVR-S) can be
weakened and the hedging problem can be studied in markets that admit
certain types of arbitrage.

\subsubsection{Super-replication}\label{sec4.1.1}
Regarding the super-replication of contingent claims in markets with
short sales prohibition we have the following theorem.

\begin{teor}\label{theoremsuperreplication}
Suppose $\mathcal{M}_{\mathrm{sup}}(S)\neq\varnothing$. A nonnegative random
variable $f$ measurable with respect to $\mathcal{F}_T$ can be written as
%
\begin{equation}
\label{supereplieq}f=x+(H\cdot S)_T-C_T
\end{equation}
with $x$ constant, $H\in\mathcal{A}$ and $C\geq0$ an adapted and
nondecreasing c\`adl\`ag process with $C_0=0$ if and only if
\[
\sup_{Q\in\mathcal{M}_{\mathrm{sup}}(S)}E^Q[f]<\infty.
\]
In this case, $x=\sup_{Q\in\mathcal{M}_{\mathrm{sup}}(S)}E^Q[f]$ is the minimum
amount of initial capital for which there exist $H\in\mathcal{A}$ and
$C\geq0$ an adapted and nondecreasing c\`adl\`ag process with $C_0=0$
such that (\ref{supereplieq}) holds.
\end{teor}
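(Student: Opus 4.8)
The plan is to prove the two implications of this super-replication duality theorem separately, with the forward (easy) direction coming from the supermartingale property established in Lemma \ref{lemma2} and the reverse (hard) direction coming from a closedness/separation argument built on the results of F\"ollmer and Kramkov.

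First I would dispatch the implication that super-replicability implies finiteness of the dual bound, together with the lower bound $x \geq \sup_{Q} E^Q[f]$. Suppose $f = x + (H\cdot S)_T - C_T$ with $H \in \mathcal{A}$, $C$ nonnegative nondecreasing and $C_0 = 0$. Fix any $Q \in \mathcal{M}_{sup}(S)$. By Lemma \ref{lemma2}, $(H\cdot S)$ is a $Q$-supermartingale, so $E^Q[(H\cdot S)_T] \leq 0$. Since $C_T \geq 0$ and $f \geq 0$, taking $Q$-expectations in \eqref{supereplieq} gives $E^Q[f] = x + E^Q[(H\cdot S)_T] - E^Q[C_T] \leq x$. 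Taking the supremum over $Q \in \mathcal{M}_{sup}(S)$ yields both $\sup_Q E^Q[f] \leq x < \infty$ and the lower bound on the minimal super-replicating capital.

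The substantial direction is the converse: assuming $x_0 := \sup_{Q \in \mathcal{M}_{sup}(S)} E^Q[f] < \infty$, produce an admissible $H$ and a nondecreasing $C$ realizing \eqref{supereplieq} with initial capital exactly $x_0$. The natural route, following F\"ollmer and Kramkov and the optional decomposition philosophy, is to consider the process defined for each $Q$ by the $Q$-essential supremum of conditional expectations, roughly $U_t := \esssup{Q \in \mathcal{M}_{sup}(S)} E^Q[f \mid \mathcal{F}_t]$, show it is a supermartingale simultaneously under every $Q \in \mathcal{M}_{sup}(S)$ with $U_0 = x_0$ and $U_T = f$, and then invoke an optional decomposition theorem to write $U = x_0 + (H\cdot S) - C$. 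Here the short-sales structure enters crucially: because of Corollary \ref{thesetps} and Remark \ref{remesmm}, the family $\mathcal{M}_{sup}(S)$ is exactly the family under which all value processes $(H\cdot S)$, $H \in \mathcal{A}$, are supermartingales, so the decomposition produced by \cite{follmerkramkov} for this cone-constrained family automatically delivers an integrand $H$ satisfying the sign constraint $H^i \geq 0$ for $i > d$, i.e. $H \in \mathcal{A}$. One must check admissibility (iii): since $f \geq 0$ and $U \geq 0$, the value process $(H\cdot S) = U - x_0 + C \geq -x_0$ is bounded below, so (iii) holds.

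The main obstacle I expect is establishing the correct optional decomposition in this constrained, general-semimartingale setting. Two technical points require care. First, one must verify that the aggregated process $U$ is well-defined and is a supermartingale under \emph{every} element of $\mathcal{M}_{sup}(S)$ simultaneously, which typically needs a lattice/upward-directedness property of the family $\{E^Q[f \mid \mathcal{F}_\cdot]\}$ and an argument that the essential supremum can be taken in a c\`adl\`ag-aggregated way. Second, the optional decomposition theorem of F\"ollmer and Kramkov must be applied to a family of measures characterized by a \emph{closed convex cone} constraint rather than a linear (martingale) constraint; the closedness of the relevant set of terminal values, which in our framework is precisely Lemma \ref{lemmapointwiseconstraints}, is what licenses the dual representation and guarantees the supergradient lives in the polar cone, forcing the sign constraints on $H$. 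Once these are in place the extraction of $H$ and the nondecreasing remainder $C := x_0 + (H\cdot S) - U$ is routine, and the attainment $x = x_0$ follows by combining the lower bound from the first paragraph with the explicit construction.
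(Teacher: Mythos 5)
Your proposal is correct and takes essentially the same route as the paper: the paper's proof is a one-line citation of Corollary \ref{thesetps} together with Example 2.2, Example 4.1 and Proposition 4.1 of F\"ollmer--Kramkov, and your argument simply unpacks that machinery (the easy direction via Lemma \ref{lemma2}, the hard direction via the essential-supremum process and the constrained optional decomposition, exactly as the paper itself does in its proof of Theorem \ref{replication}). The only cosmetic difference is that you verify the closedness hypothesis of the F\"ollmer--Kramkov theorem through Lemma \ref{lemmapointwiseconstraints}, whereas the paper delegates this to F\"ollmer--Kramkov's Example 2.2; these amount to the same fact about short-sales cone constraints.
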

\begin{pf}
This follows directly from Corollary \ref{thesetps} in this paper and
Examples~2.2, 4.1 and Proposition 4.1 in \cite{follmerkramkov}.
\end{pf}
Before we give an analogous result regarding perfect replication of
contingent claims, we present an example of a contingent claim that
cannot be super-replicated under short sales prohibition.
%
\begin{example}
This example illustrates how, under certain market hypotheses, it is
possible to explicitly exhibit a payoff that cannot be super-replicated
without short selling. Suppose that $S$ is of the form $S=\mathcal
{E}(R)$. Suppose that $R$ is a continuous $P$-martingale such that
$R_0=0$ and there exist $\varepsilon,C>0$ such that $P(\varepsilon\leq
[R,R]_T\leq C)=1$. Let $f=\exp(-R_T)$. We have, by Novikov's criterion
(see Theorem III-45 in \cite{protter}) and by Girsanov's theorem (see
Theorem III-40 in~\cite{protter}), that for every $\alpha>0$, $\frac
{dQ^{\alpha}}{dP}=\mathcal{E}(-\alpha R)_T$ defines a measure
$Q^{\alpha
}\in\mathcal{M}_{\mathrm{sup}}(S)$. Additionally,
\begin{eqnarray*}
E^{Q^{\alpha}}[f]&=&E^P\bigl[\mathcal{E}(-\alpha R)_Tf
\bigr]
\\
&=&E^P\bigl[\mathcal{E}\bigl(-(1+\alpha) R\bigr)_T\exp
\bigl((1/2+\alpha)[R,R]_T\bigr)\bigr]
\\
&\geq& E^P\bigl[\mathcal{E}\bigl(-(1+\alpha) R\bigr)_T
\bigr]\exp\bigl((1/2+\alpha)\varepsilon\bigr)
\\
&=&\exp\bigl((1/2+\alpha)\varepsilon\bigr)\rightarrow\infty
\end{eqnarray*}
as $\alpha$ goes to infinity. Hence $\sup_{Q\in\mathcal
{M}_{\mathrm{sup}}(S)}E^Q[f]=\infty$, and Theorem \ref{theoremsuperreplication}
implies that $f$ cannot be super-replicated without selling $S$ short.
However, if we assume that the market where $S$ can be sold short is
complete under $P$, that is, $\mathcal{M}_{\mathrm{loc}}(S)=\{P\}$, then in the
market where $S$ can be sold short $f$ can be replicated because it
belongs to $L^1(P)$. Indeed, we have that
\[
0\leq f\leq\exp\biggl(\frac{C}{2} \biggr)\mathcal{E}(-R)_T
\in L^1(P).
\]
\end{example}
%
\subsubsection{Replication}\label{sec4.1.2}
A question that remains open, however, is whether there exists a
characterization of contingent claims that can be perfectly replicated.
In this regard we have the following result analogous to the one proven
by Ansel and Stricker in \cite{anselstricker}; see also Theorems 5.8.1
and 5.8.4 in~\cite{KS}.

\begin{teor}\label{replication}
Suppose $\mathcal{M}_{\mathrm{sup}}(S)\neq\varnothing$. For a nonnegative random
variable $f$ measurable with respect to $\mathcal{F}_T$ the following
statements are equivalent:
\begin{longlist}[(ii)]
\item[(i)] $f=x+(H\cdot S)_T$ with $x$ constant and $H\in\mathcal{A}$
such that $(H\cdot S)$ is an $R^*$-martingale for some $R^*\in\mathcal
{M}_{\mathrm{sup}}(S)$.
\item[(ii)] There exists $R^*\in\mathcal{M}_{\mathrm{sup}}(S)$ such that
%
\begin{equation}
\label{maxattain}\sup_{Q\in\mathcal
{M}_{\mathrm{sup}}(S)}E^Q[f]=E^{R^*}[f]<
\infty.
\end{equation}
\end{longlist}
\end{teor}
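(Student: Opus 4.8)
The plan is to prove the two implications separately, using the super-replication result (Theorem \ref{theoremsuperreplication}) and the supermartingale property of admissible gains from Lemma \ref{lemma2} as the main engines.

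For (i) $\Rightarrow$ (ii), suppose $f = x + (H \cdot S)_T$ with $H \in \mathcal{A}$ and $(H \cdot S)$ an $R^*$-martingale. First I would fix an arbitrary $Q \in \mathcal{M}_{sup}(S)$ and invoke Lemma \ref{lemma2}, which tells us that $(H \cdot S)$ is a $Q$-supermartingale; since $(H \cdot S)_0 = 0$, this yields $E^Q[(H \cdot S)_T] \leq 0$ and hence $E^Q[f] \leq x$. Taking the supremum over $Q$ gives $\sup_{Q \in \mathcal{M}_{sup}(S)} E^Q[f] \leq x$. On the other hand, the martingale property of $(H \cdot S)$ under $R^*$ gives $E^{R^*}[(H \cdot S)_T] = 0$, so $E^{R^*}[f] = x$. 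Combining the two inequalities, $x \leq \sup_{Q} E^Q[f] \leq x$, so the supremum is attained at $R^*$ and equals $x < \infty$, which is exactly (ii).

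For the converse (ii) $\Rightarrow$ (i), set $x := E^{R^*}[f] = \sup_{Q \in \mathcal{M}_{sup}(S)} E^Q[f] < \infty$. Since this supremum is finite, Theorem \ref{theoremsuperreplication} provides a decomposition $f = x + (H \cdot S)_T - C_T$ with $H \in \mathcal{A}$ and $C \geq 0$ adapted, nondecreasing, \cadlag with $C_0 = 0$. The idea is that evaluating this decomposition under the extremal measure $R^*$ should force the increasing process to vanish. Concretely, I would take $R^*$-expectations: by Lemma \ref{lemma2} the gain process $(H \cdot S)$ is an $R^*$-supermartingale, so $E^{R^*}[(H \cdot S)_T] \leq 0$, while $C_T \geq 0$ gives $E^{R^*}[C_T] \geq 0$. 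Substituting into $E^{R^*}[f] = x + E^{R^*}[(H \cdot S)_T] - E^{R^*}[C_T] = x$ forces $E^{R^*}[(H \cdot S)_T] = E^{R^*}[C_T] = 0$. From $C_T \geq 0$ together with $E^{R^*}[C_T] = 0$ I conclude $C_T = 0$ $R^*$-a.s., and since $R^* \sim P$ and $C$ is nondecreasing starting at $0$, the whole process $C$ vanishes, yielding $f = x + (H \cdot S)_T$.

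It remains to upgrade $(H \cdot S)$ from an $R^*$-supermartingale to a true $R^*$-martingale, which I expect to be the crux of the argument. Having $E^{R^*}[(H \cdot S)_T] = 0 = (H \cdot S)_0$ means the supermartingale has constant expectation on $[0,T]$: the map $t \mapsto E^{R^*}[(H \cdot S)_t]$ is nonincreasing and agrees at the endpoints, hence is identically zero. Then for $0 \leq s \leq t \leq T$ the nonnegative random variable $(H \cdot S)_s - E^{R^*}[(H \cdot S)_t \mid \mathcal{F}_s]$ has zero expectation, so it vanishes a.s., which is precisely the martingale identity. This establishes (i). The one point requiring care is the integrability bookkeeping needed to justify taking $R^*$-expectations term by term in the decomposition, but this is automatic: $f \geq 0$ with $E^{R^*}[f] = x < \infty$ and $(H \cdot S)_T \in L^1(R^*)$ by the supermartingale property, so that $C_T = x + (H \cdot S)_T - f$ is $R^*$-integrable.
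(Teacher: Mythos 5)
Your proof is correct, and your argument for (i) $\Rightarrow$ (ii) is exactly the paper's. For (ii) $\Rightarrow$ (i) your route is packaged genuinely differently, though it rests on the same underlying machinery. The paper never invokes Theorem \ref{theoremsuperreplication}: instead it introduces the super-replication price process $V_t:=\mathrm{ess\,sup}_{Q\in\mathcal{M}_{sup}(S)}E^Q[f\mid\mathcal{F}_t]$, cites Lemma A.1 of \cite{follmerkramkov} to show $V$ is a supermartingale under every $Q\in\mathcal{M}_{sup}(S)$, uses \eqref{maxattain} to get $V_0=E^{R^*}[V_T]$ so that $V$ is an $R^*$-martingale (the same constant-expectation upgrade you perform, applied there to $V$ rather than to $H\cdot S$), and only then applies the optional decomposition theorem (Theorem 3.1 of \cite{follmerkramkov}) to write $V=V_0+(H\cdot S)-C$; the expectation computation forcing $C\equiv 0$ is the same as yours, and the martingale property of $H\cdot S$ is then inherited from that of $V$ rather than argued separately. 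You instead treat Theorem \ref{theoremsuperreplication} as a black box, work only with the terminal-time identity $f=x+(H\cdot S)_T-C_T$ at the minimal capital $x=\sup_{Q\in\mathcal{M}_{sup}(S)}E^Q[f]$, and upgrade the supermartingale $H\cdot S$ (Lemma \ref{lemma2}) to an $R^*$-martingale directly via the constant-expectation argument. Since Theorem \ref{theoremsuperreplication} is itself deduced from Proposition 4.1 of \cite{follmerkramkov}, both proofs ultimately use the same deep input; what yours buys is modularity (no need to reintroduce $V$ or cite Lemma A.1 again), while the paper's yields the additional information that the hedging value process $x+(H\cdot S)$ coincides with the super-replication price process $V$. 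Your integrability bookkeeping (noting that $f$, $(H\cdot S)_T$ and hence $C_T$ all lie in $L^1(R^*)$ before taking expectations term by term) is also handled correctly, and it is the right point to flag.
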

\begin{pf}
That (i) implies (ii) follows from the fact that $(H\cdot S)$ is a
$Q$-su\-permartingale starting at 0 for all $Q\in\mathcal{M}_{\mathrm{sup}}(S)$;
see Corollary \ref{thesetps}. To prove that (ii) implies (i) we define
for all $t$ in $[0,T]$
%
\begin{equation}
\label{superrep}V_t:=\mathop{\operatorname{ess}\sup}_{Q\in\mathcal
{M}_{\mathrm{sup}}(S)}E^Q[f|
\mathcal{F}_t].
\end{equation}
By Lemma A.1 in \cite{follmerkramkov} the process $V$ is a
supermartingale under any $Q\in\mathcal{M}_{\mathrm{sup}}(S)$. In particular $V$
is an $R^*$-supermartingale. The fact that $V_T=f$ and (\ref
{maxattain}) imply that $V_0=E^{R^*}[V_T]$ and $V$ is a martingale under
$R^*$. On the other hand by Theorem 3.1 in \cite{follmerkramkov},
$V=V_0+(H\cdot S)-C$ for some $H\in\mathcal{A}$ and $C\geq0$
nondecreasing. Since $(H\cdot S)$ is an $R^*$-supermartingale (see
Corollary~\ref{thesetps}) we conclude that
\[
E^{R^*}[C_T]=V_0+E^{R^*}\bigl[(H\cdot
S)_T\bigr]-E^{R^*}[V_T]\leq0.
\]
Then, $C\equiv0$ $R^*$-almost surely and $(H\cdot S)$ is an $R^*$-martingale.
\end{pf}
$V_t$ in (\ref{superrep}) is usually used to define the selling price
of the claim $f$ at time~$t$. It represents the minimum cost of
super-replication of the claim $f$ at time $t$; see Proposition 4.1 in
\cite{follmerkramkov}. We now give an example of a payoff in markets
with continuous price processes which cannot be attained with
``martingale strategies.''
%
\begin{example}\label{exampledigitaloption}
Suppose that the market consists of a single risky asset with
continuous price process $S$. Assume further that $S$ is a
$P$-martingale which is not constant $P$-almost surely. Then $f=1_{\{
S_T\leq S_0\}}$ does not belong to the space
%
\begin{eqnarray}
\label{setg1}
&&\mathcal{G}:=\bigl\{x+(H\cdot S)_T\dvtx x\in\mathbb{R},H
\in\mathcal{A},
\nonumber\\[-8pt]\\[-8pt]
&&\hspace*{27.6pt}(H\cdot S)\mbox{ is a $Q$-martingale for some $Q\in\mathcal{M}_{\mathrm{sup}}(S)$}
\bigr\}.\nonumber
\end{eqnarray}
Indeed, for each $n\in\mathbb{N}$, let $(T_{n,m})_m$ be a localizing
sequence for
\[
\mathcal{E}\bigl(-n(S_t-S_0)\bigr).
\]
Define $Q_{n,m}\in\mathcal{M}_{\mathrm{sup}}(S)$ by
\[
\frac{dQ_{n,m}}{dP}=\mathcal{E}\bigl(-n(S_{T\wedge
T_{n,m}}-S_0)\bigr).
\]
We have that
\begin{eqnarray*}
E^{Q_{n,m}}[f]&=&1-E^{Q_{n,m}}[1-f]
\\
&=&1-E^P \biggl[1_{\{S_T> S_0\}}\exp\biggl(-n(S_{T\wedge T_{n,m}}-S_0)-
\frac
{n^2}{2}[S,S]_{T\wedge T_{n,m}} \biggr) \biggr].
\end{eqnarray*}
Since the expression under the last expectation is dominated by $\exp
(nS_0)\in\mathbb{R}$, the Dominated Convergence theorem implies that
for fixed $n$
\[
\lim_mE^{Q_{n,m}}[f]=1-E^P
\biggl[1_{\{S_T> S_0\}}\exp\biggl(-n(S_{T}-S_0)-
\frac{n^2}{2}[S,S]_{T} \biggr) \biggr].
\]
Applying the dominated convergence theorem once again we obtain that
\[
\lim_n\lim_mE^{Q_{n,m}}[f]=1.
\]
This allows us to conclude that
\[
\sup_{Q\in\mathcal{M}_{\mathrm{sup}}(S)}E^Q[f]=1.
\]
However, since $f$ is not $P$-almost surely constant, this supremum is
never attained. By Theorem \ref{replication}, $f$ does not belong to
the set $\mathcal{G}$ defined in (\ref{setg1}).
\end{example}
%
\begin{rem}
Example \ref{exampledigitaloption} illustrates that in nontrivial
markets with continuous price processes, the minimum super-replicating
cost of a digital option of the form $1_{\{S_T\leq S_0\}}$ is 1; See
Theorem \ref{theoremsuperreplication}. We will give other examples of
claims that cannot be perfectly replicated with martingale strategies
at the end of this section.
\end{rem}
We now proceed to give an alternative characterization of the random
variables in $\mathcal{G}$, with $\mathcal{G}$ as in (\ref{setg1}), by
extending the concept of maximal claims introduced by Delbaen and
Schachermayer in \cite{delbaenmain} and \cite{delbaennumeraire}.

\subsection{Maximal claims}\label{sec4.2}
By using the extension of the (FTAP) proved in Section \ref{sec3}, this section
generalizes the ideas presented in \cite{delbaennumeraire} to markets
with short sales prohibition. For simplicity, we assume below that $S$,
the price process of the underlying asset, is one-dimensional. The
results can be easily extended to the multi-dimensional case. Recall
the definitions of no arbitrage under short sales prohibition (NA-S)
and no free lunch with vanishing risk under short sales prohibition
(NFLVR-S) given in Section \ref{sec2}.

\subsubsection{The main theorem}\label{sec4.2.1}
%
\begin{defi}
Let $\mathcal{J}\subset L^0(P)$. We say that an element $f$ is maximal
in $\mathcal{J}$ if:
\begin{longlist}[(ii)]
\item[(i)] $f\in\mathcal{J}$ and
\item[(ii)] $f\leq g$ $P$-almost surely and $g\in\mathcal{J}$ imply
that $f=g$ $P$-almost surely.
\end{longlist}
\end{defi}

\begin{defi}\label{thesetB}
Given $H\in\mathcal{A}$, we define $\mathcal{B}(H)$ as the set of
random variables of the form
\[
\bigl(\bigl(H^1,H^2\bigr)\cdot\bigl(S^1,S^2
\bigr)\bigr)_T,
\]
where $S^1=(H\cdot S)$, $S^2=S$; $(H^1,H^2)\in L(S^1,S^2)$; $H^2\geq
0$, $H^1_0\equiv1$, $H^2_0\equiv0$ and
%
\begin{equation}
\label{equationworkable}\bigl(H^1-1,H^2\bigr)\cdot
\bigl(S^1,S^2\bigr)\geq-\beta-\alpha S^1
\end{equation}
for some $\alpha,\beta>0$.
\end{defi}
The following is the main theorem of this section.
%
\begin{teor}\label{maximalclaims}
Let $f\in L^0(P)$ be a random variable bounded from below. The
following statements are equivalent:
\begin{enumerate}[(iii)]
\item[(i)] $f=(H\cdot S)_T$ for some $H\in\mathcal{A}$ such that:
\begin{enumerate}[(a)]
\item[(a)] the market where $S^1=(H\cdot S)$ and $S^2=S$ trade with
short selling prohibition on $S^2$ satisfies (NFLVR-S) and
\item[(b)] $f$ is maximal in $\mathcal{B}(H)$ (see Definition \ref{thesetB}).
\end{enumerate}
\item[(ii)] There exists $R^*\in\mathcal{M}_{\mathrm{sup}}(S)$ such that
\[
\sup_{Q\in\mathcal{M}_{\mathrm{sup}}(S)}E^Q[f]=E^{R^*}[f]=0.
\]
\item[(iii)] There exists $H\in\mathcal{A}$ such that $f=(H\cdot S)_T$
and $(H\cdot S)$ is an $R^*$-martingale for some $R^*$ in $\mathcal
{M}_{\mathrm{sup}}(S)$.
\end{enumerate}
If we further assume that $f$ is bounded and $\mathcal{M}_{\mathrm{loc}}(S)\neq
\varnothing$, the above statements are equivalent to:
\begin{enumerate}[(iii)]
\item[(iv)] There exists $H\in\mathcal{A}$ such that $f=(H\cdot S)_T$
for some $H\in\mathcal{A}$ and $(H\cdot S)$ is an $R$-martingale for
all $R$ in $\mathcal{M}_{\mathrm{loc}}(S)$.
\end{enumerate}
\end{teor}
%
\begin{rem}
It is important to point out that we can take the same measure $R^*$ in
(ii) and (iii), and the same strategy $H$ in (i), (iii) and (iv).
\end{rem}
Before establishing some lemmas necessary to prove this theorem we make
some additional remarks.
%
\begin{rem}
A related result for diffusion price processes can be found in Theorem
5.8.4 in \cite{KS}. This theorem uses the alternative assumption that
\[
\bigl\{(H\cdot S)_{\rho}\mbox{: $\rho$ is a stopping time in $[0,T]$}
\bigr\}
\]
is $Q$-uniformly integrable for all $Q\in\mathcal{M}_{\mathrm{sup}}(S)$. This
hypothesis also implies that $(H\cdot S)$ is a $Q$-martingale for all
$Q\in\mathcal{M}_{\mathrm{loc}}(S)$.
\end{rem}
%
\begin{rem}
Condition (\ref{equationworkable}) resembles the definition of workable
contingent claims studied in \cite{delbaenworkable}.
\end{rem}
%
\begin{rem}\label{remarkmaximaliywithoutconstraints}
If $f=(H\cdot S)_T$, $(H\cdot S)$ is an $R^*$-martingale for some
$R^*\in\mathcal{M}_{\mathrm{sup}}(S)$ and $1_{\{H=0\}}\cdot S$ is
indistinguishable from 0, then $R^*\in\mathcal{M}_{\mathrm{loc}}(S)\neq
\varnothing
$. Indeed, observe that if we call $M=(H\cdot S)$, then, by Corollary
3.5 in \cite{anselstricker}, $ (\frac{1}{H}1_{\{H\neq0\}}
)\cdot M=1_{\{H\neq0\}}\cdot S=S-S_0$ is an $R^*$-local martingale.
Theorem 11.4.4 in \cite{delbaenbook} implies that the claim $f$ is
also maximal in
%
\begin{equation}
\label{admissiblewithnoshortsalesproh}\tilde{\mathcal{K}}=\bigl\{
(H\cdot
S)_T\dvtx H\in\tilde{A}\bigr\},
\end{equation}
where $\tilde{A}$ is the set of strategies that satisfy (i), (ii) and
(iii) in Definition \ref{admissiblestrategies}. Additionally, also by
Theorem 11.4.4 in \cite{delbaenbook}, Theorem \ref{maximalclaims}
shows that when $\mathcal{M}_{\mathrm{loc}}(S)\neq\varnothing$, all bounded
maximal claims in $\mathcal{B}(H)$ (see Definition \ref{thesetB}) of
the form $(H\cdot S)_T$ for some $H\in\mathcal{A}$ are maximal in
$\tilde{\mathcal{K}}$ as defined in (\ref{admissiblewithnoshortsalesproh}).
\end{rem}
The proof of Theorem \ref{maximalclaims} that we present below mimics
the argument presented in \cite{delbaennumeraire}. In this
generalization, the (FTAP) under short sales prohibition (Theorem~\ref
{maintheorem}) and the results presented by Kabanov in \cite{kabanov}
are fundamental.

\subsubsection{Some lemmas}\label{sec4.2.2}
We first recall the following definition.
%
\begin{defi}
A subset $\mathcal{N}$ of $L^0(P)$ is bounded in $L^0(P)$ if for all
$\varepsilon>0$ there exists $M>0$ such that $P(|Y|>M)<\varepsilon$ for all
$Y\in\mathcal{N}$.
\end{defi}
The following lemmas will be used.
%
\begin{lema}
The condition of (NFLVR-S) holds if and only if (NA-S) holds, and the
set
\[
\mathcal{K}_1=\bigl\{(H\cdot S)_T\mbox{: $H\in
\mathcal{K}$ and $(H\cdot S)\geq-1$}\bigr\}
\]
is bounded in $L^0(P)$.
\end{lema}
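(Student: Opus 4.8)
The plan is to prove both implications by contraposition, using the working description of a failure of (NFLVR) recorded in Remark~\ref{remonnflvr} together with the Delbaen--Schachermayer lemma that every sequence in $L^0_+(P)$ admits forward convex combinations converging $P$-almost surely to a $[0,\infty]$-valued limit (see \cite{delbaenmain}). Throughout I read $\mathcal{K}_1$ as $\{(H\cdot S)_T : H\in\mathcal{A},\ (H\cdot S)\geq -1\}$, and I use that $\mathcal{A}$ is a convex cone (Definition~\ref{admissiblestrategies}), so that convex combinations of admissible strategies whose value is bounded below by $-1$ are again admissible with value bounded below by $-1$. The inclusion $\mathcal{C}\subseteq\overline{\mathcal{C}}$ makes the implication (NFLVR)$\Rightarrow$(NA) immediate, so the real content is the boundedness of $\mathcal{K}_1$ in one direction and the passage from (NA) to (NFLVR) in the other.

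For the forward implication I assume (NFLVR) and show that $\mathcal{K}_1$ is bounded in $L^0(P)$, arguing by contradiction. If $\mathcal{K}_1$ is unbounded there are $\alpha>0$ and $f_n=(H_n\cdot S)_T\in\mathcal{K}_1$ with $P(f_n>n)>\alpha$. Rescaling, set $g_n:=\tfrac1n f_n=(\tfrac1n H_n\cdot S)_T$; then $\tfrac1n H_n\in\mathcal{A}$, the downside satisfies $g_n^-\leq\tfrac1n\to0$, and $P(g_n>1)>\alpha$. The truncations $g_n\wedge 1$ lie in $\mathcal{C}$, and the convex-combination lemma produces $h_n\in\mathrm{conv}(g_n\wedge1,g_{n+1}\wedge1,\dots)\subseteq\mathcal{C}$ with $h_n\to h$ $P$-almost surely and $0\leq h\leq 1$; the uniform lower bound $E^P[g_n\wedge1]\geq\alpha-\tfrac1n$ forces, via bounded convergence, $E^P[h]\geq\alpha>0$, so $h\geq0$ is nontrivial. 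An Egorov step, truncating $h_n$ against a level set $\{h>\delta\}$ of positive measure on which $h_n\to h$ uniformly, then yields a nonzero element of $\overline{\mathcal{C}}\cap L^{\infty}_+(P)$, contradicting (NFLVR).

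For the reverse implication I assume (NA) and the boundedness of $\mathcal{K}_1$, and suppose (NFLVR) fails. By Remark~\ref{remonnflvr} there are $H_n\in\mathcal{A}$ and bounded $f_n\to f$ in $L^{\infty}(P)$ with $(H_n\cdot S)_T\geq f_n$, $f\geq0$ $P$-almost surely and $P(f>0)>0$; after passing to a subsequence I may take $\|f_n-f\|_\infty\leq 2^{-n}$, so that $(H_n\cdot S)_T\geq -2^{-n}$ with vanishing downside at the terminal time. The goal is now to manufacture a genuine, non-approximate arbitrage, i.e.\ an element of $\mathcal{C}\cap L^{\infty}_+(P)\setminus\{0\}$, which contradicts (NA). This is where the boundedness of $\mathcal{K}_1$ enters decisively: it supplies the tightness needed to pass the (convex combinations of the) gain processes $(H_n\cdot S)$ to a limit in the semimartingale topology, and Lemma~\ref{lemmapointwiseconstraints} then identifies this limit as $(H\cdot S)$ for an admissible $H$ dominating $f$; hence $f\in\mathcal{C}$, a nonzero nonnegative arbitrage, contradicting (NA). This argument is the constrained analogue of the ones of Delbaen and Schachermayer \cite{delbaenmain} and of Kabanov \cite{kabanov}.

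The step I expect to be the main obstacle is exactly this last passage in the reverse direction: the hypotheses at hand only place the candidate claim $f$ in the norm-closure $\overline{\mathcal{C}}$, and every naive truncation argument reproduces precisely that (and hence only re-derives the failure of (NFLVR)), so the crux is upgrading this closure-membership to membership in $\mathcal{C}$ itself. Boundedness of $\mathcal{K}_1$ is what rules out the limit escaping ``to infinity,'' and the care required lies in combining this compactness with the closedness statement of Lemma~\ref{lemmapointwiseconstraints} (after normalizing the admissibility bounds) so that the limiting admissible strategy exists, is genuinely admissible, and super-replicates $f$. The forward direction, by contrast, needs only the weaker conclusion that the constructed claim lies in $\overline{\mathcal{C}}$, which is why the scaling-and-Egorov argument there is comparatively routine.
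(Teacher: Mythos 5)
Your forward direction is sound and is essentially the standard Delbaen--Schachermayer argument: rescale an unbounded sequence in $\mathcal{K}_1$, truncate at $1$ (these truncations lie in $\mathcal{C}$ because $\mathcal{C}$ is convex and downward solid in $L^{\infty}(P)$), pass to almost surely convergent convex combinations, and cut against $\delta 1_A$ for an Egorov set $A\subseteq\{h>\delta\}$ of positive measure to exhibit a nonzero element of $\overline{\mathcal{C}}\cap L^{\infty}_+(P)$. Note for comparison that the paper itself does not argue directly at all: it simply invokes Lemma 2.2 of \cite{kabanov}, having verified in the proof of Theorem \ref{maintheorem} that the constrained cone of value processes satisfies Kabanov's hypotheses.

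The reverse direction, however, contains a genuine gap, precisely at the step you yourself flag as the obstacle. You assert that $L^0(P)$-boundedness of $\mathcal{K}_1$ ``supplies the tightness needed'' to pass convex combinations of the processes $(H_n\cdot S)$ to a limit in the semimartingale topology, and then invoke Lemma \ref{lemmapointwiseconstraints}. No justification is given, and none is available with the tools of this paper: boundedness in probability of terminal values does not yield relative compactness of value processes in the semimartingale topology (results of this flavor exist but are deep and are not what this lemma rests on). Worse, Lemma \ref{lemmapointwiseconstraints} applies only to processes uniformly bounded below by $-1$, whereas the failure of (NFLVR) gives you only the terminal bounds $(H_n\cdot S)_T\geq -2^{-n}$; the running admissibility bounds $-\alpha_n$ may a priori explode, so your sequence need not even lie in the set to which the closedness lemma applies. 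The standard proof (Kabanov's Lemma 2.2, or Proposition 3.6 in \cite{delbaenmain}) needs no limiting strategy and uses the two hypotheses in roles opposite to yours. First, under (NA) a terminal lower bound propagates to the whole value process: with $\tau_{\epsilon}:=\inf\{t:(H_n\cdot S)_t<-2^{-n}-\epsilon\}$, the strategy $1_{\{\tau_{\epsilon}\leq T\}}1_{(\tau_{\epsilon},T]}H_n$ is again in $\mathcal{A}$ (here the pointwise cone constraint of Definition \ref{admissiblestrategies} is what keeps it admissible) and would be an arbitrage if $P(\tau_{\epsilon}\leq T)>0$; hence $(H_n\cdot S)\geq -2^{-n}$ as a process. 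Second, $2^nH_n$ then satisfies $(2^nH_n\cdot S)\geq -1$, so $2^n(H_n\cdot S)_T\in\mathcal{K}_1$, while $2^n(H_n\cdot S)_T\geq 2^nf-1\to\infty$ on a set $\{f>\delta\}$ of positive measure. The contradiction is thus with the boundedness of $\mathcal{K}_1$ in $L^0(P)$, not with (NA); your plan of upgrading $f\in\overline{\mathcal{C}}$ to $f\in\mathcal{C}$ is both unnecessary and, as written, unproved.
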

\begin{pf}
This corresponds to Lemma 2.2 in \cite{kabanov}. As already noticed
before in the proof of Theorem \ref{maintheorem}, the results in \cite
{kabanov} can be applied to our case because the convex portfolio
constraints satisfy the desired hypotheses.
\end{pf}
%
\begin{lema}\label{lemalocalmartingalefactor}
The condition of (NFLVR-S) holds if and only if (NA-S) holds and there
exists a strictly positive $P$-local martingale $L=(L_t)_{0\leq t\leq
T}$ such that $L_0=1$ and $P\in\mathcal{M}_{\mathrm{sup}}(LS)$.
\end{lema}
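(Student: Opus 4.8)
The final statement (Lemma~\ref{lemalocalmartingalefactor}) asserts that (NFLVR) holds if and only if (NA) holds together with the existence of a strictly positive $P$-local martingale $L$ with $L_0=1$ such that $P\in\mathcal{M}_{sup}(LS)$. My plan is to read this as a reformulation of the main (FTAP) result (Theorem~\ref{maintheorem}) via a change of num\'eraire / density argument, together with the already established Lemma~\ref{lemma1} characterization that (NFLVR) decomposes into (NA) plus an $L^0$-boundedness condition. The condition $P\in\mathcal{M}_{sup}(LS)$ should be understood exactly as in Definition~\ref{esmm} and Remark~\ref{remesmm}: the components of $LS$ that may not be sold short are $P$-supermartingales, and the components that may be admissibly sold short are $P$-local martingales.

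\textbf{The forward direction.} Assuming (NFLVR), Theorem~\ref{maintheorem} gives a measure $Q\in\mathcal{M}_{sup}(S)$ with $Q\sim P$. The plan is to let $L$ be a c\`adl\`ag version of the density martingale $L_t:=E^P\!\left[\tfrac{dQ}{dP}\,\middle|\,\mathcal{F}_t\right]$, which is a strictly positive $P$-martingale (hence $P$-local martingale) with $L_0=1$ because $\mathcal{F}_0$ is $P$-trivial. The key step is then to translate the $Q$-(super/local-)martingale property of each $S^i$ into a statement about $LS^i$ under $P$. This is the standard Bayes-rule / Girsanov-type fact: for a $Q$-supermartingale (resp.\ $Q$-local martingale) $X$, the product $LX$ is a $P$-supermartingale (resp.\ $P$-local martingale). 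Applying this componentwise, using that $S^i$ is a $Q$-supermartingale for $i>d$ and a $Q$-local martingale for $i\le d$, yields $P\in\mathcal{M}_{sup}(LS)$. Since $Q\sim P$, (NA) is immediate (an arbitrage under $P$ would be one under $Q$, contradicting $Q\in\mathcal{M}_{sup}(S)$ via Lemma~\ref{lemma2}).

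\textbf{The reverse direction.} Conversely, suppose (NA) holds and such an $L$ exists. Define $Q$ by $\tfrac{dQ}{dP}=L_T$; strict positivity of $L$ and $L_0=1$ ensure $Q$ is a probability measure equivalent to $P$. The plan is to run the product-rule argument in reverse: since $L X$ is a $P$-supermartingale (resp.\ $P$-local martingale) and $L$ is the density process of $Q$ with respect to $P$, the process $X$ is a $Q$-supermartingale (resp.\ $Q$-local martingale). Applying this to $X=S^i$ gives that $S^i$ is a $Q$-supermartingale for $i>d$ and a $Q$-local martingale for $i\le d$, i.e.\ $Q\in\mathcal{M}_{sup}(S)$, so $\mathcal{M}_{sup}(S)\neq\emptyset$ and (NFLVR) follows from Theorem~\ref{maintheorem}. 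Note that under this route the (NA) hypothesis is actually subsumed, so the reverse implication is the easier half.

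\textbf{Main obstacle.} The technical heart is the rigorous passage between ``$LS^i$ has property $\mathcal{P}$ under $P$'' and ``$S^i$ has property $\mathcal{P}$ under $Q$'' at the level of \emph{local} martingales and supermartingales, rather than true martingales, since $L$ need only be a local martingale and the $S^i$ are only locally bounded. The clean way to handle this is to localize: choose a common reducing sequence of stopping times $(\sigma_n)$ along which $L$, each $S^i$, and the relevant products are genuine integrable (super)martingales, apply the classical Bayes conditional-expectation formula $E^Q[X_t\mid\mathcal{F}_s]=L_s^{-1}E^P[L_tX_t\mid\mathcal{F}_s]$ on each localized piece, and then remove the localization. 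The supermartingale inequality must be checked to survive this procedure, and the strict positivity of $L$ is exactly what guarantees $L_s^{-1}$ is well-defined and that the equivalence $Q\sim P$ is preserved. Once this product-rule correspondence is established, the lemma is a direct consequence of Theorem~\ref{maintheorem}.
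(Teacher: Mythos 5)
Your forward direction is fine and is the standard argument: (NA) follows from $\mathcal{C}\subset\overline{\mathcal{C}}$, Theorem \ref{maintheorem} produces $Q\in\mathcal{M}_{sup}(S)$, and the c\`adl\`ag density process $L_t=E^P[\frac{dQ}{dP}\mid\mathcal{F}_t]$ is a strictly positive $P$-martingale with $L_0=1$ (since $\mathcal{F}_0$ is $P$-trivial), for which the Bayes rule converts the $Q$-(super/local) martingale property of each $S^i$ into the corresponding $P$-property of $LS^i$. The reverse direction, however, has a genuine gap, and it sits exactly at the point the lemma is designed to address. You define $Q$ by $\frac{dQ}{dP}=L_T$, but $L$ is only assumed to be a strictly positive \emph{local} martingale: a positive local martingale is a supermartingale, so $E^P[L_T]\le 1$, and the inequality can be strict. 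In that case $L_T$ is the density of no probability measure, and no localization repairs this; renormalizing to $L_T/E^P[L_T]$ does not help either, because $L_t/E^P[L_T]$ is then not the density process of the renormalized measure, so the Bayes correspondence for the $S^i$ breaks down. Your remark that ``the (NA) hypothesis is actually subsumed'' is the telltale symptom: if the reverse implication held without (NA), the mere existence of a strictly positive local martingale deflator would imply (NFLVR), which is false. The three-dimensional Bessel process $R$ with $R_0=1$ is a counterexample: $S=R$ is a nonnegative, continuous, locally bounded semimartingale that admits arbitrage (Delbaen and Schachermayer), yet $L=1/R$ is a strictly positive strict local martingale with $L_0=1$ and $LS\equiv 1$, so the deflator exists while (NA), hence (NFLVR), fails.

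The proof the paper points to (Theorem 11.2.9 in \cite{delbaennumeraire}) keeps (NA) as a standing hypothesis and uses $L$ only to supply the \emph{other} half of the criterion in the lemma stated just before this one (Lemma 2.2 in \cite{kabanov}): (NFLVR) holds if and only if (NA) holds and the set $\mathcal{K}_1$ of terminal values $(H\cdot S)_T$ with $H\in\mathcal{A}$ and $(H\cdot S)\ge -1$ is bounded in $L^0(P)$. Given $P\in\mathcal{M}_{sup}(LS)$, a num\'eraire-invariance argument in the spirit of Lemma \ref{lemmaselffinancing}, combined with Lemmas \ref{lemma1} and \ref{lemma2}, shows that $L\,(1+H\cdot S)$ is a nonnegative $P$-supermartingale for every such $H$; hence $E^P[L_T(1+(H\cdot S)_T)]\le 1$, and since $L_T>0$ $P$-almost surely, Markov's inequality bounds $\mathcal{K}_1$ in $L^0(P)$. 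Together with the assumed (NA), this yields (NFLVR). So the deflator buys you boundedness in probability (the ``vanishing risk'' half), never the absence of arbitrage itself, and your argument needs to be restructured along these lines rather than through an equivalent change of measure.
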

\begin{pf}
The same proof of Theorem 11.2.9 in \cite{delbaenbook} can be applied
to our context.
\end{pf}
We now state from our framework a result that is analogous to Theorem
11.4.2 in \cite{delbaenbook}. This theorem gives necessary and
sufficient conditions under which the condition of (NA-S) holds after a
change of num\'eraire. We will need the following lemma, that proves
that the self-financing condition [see~(\ref{selffinancing})] is
independent of the choice of num\'eraire; see also \cite{protterintroduction}.
%
\begin{lema}\label{lemmaselffinancing}
Let $V$ be a positive $P$-semimartingale, $M= (\frac{S}{V},\frac
{1}{V},1 )$ and $N=(S,1,V)$. For a (three-dimensional) predictable
process $H$ the following statements are equivalent:
\begin{longlist}[(ii)]
\item[(i)] $H\in L(M)$ and
\[
H\cdot M=HM-H_0M_0=H^1\frac{S}{V}+H^2
\frac{1}{V}+H^3-H^1_0
\frac
{S_0}{V_0}-H^2_0\frac{1}{V_0}-H^3_0;
\]
\item[(ii)] $H\in L(N)$ and
\[
H\cdot N=HN-H_0N_0=H^1S+H^2+H^3V-H^1_0S_0-H^2_0-H^3_0V_0.
\]
\end{longlist}
\end{lema}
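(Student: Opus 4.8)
The plan is to exploit the fact that the two price vectors differ by the single positive factor $V$. Writing $U=1/V$, one has the coordinatewise identities $M^i = U\,N^i$ and $N^i = V\,M^i$, and since both $U$ and $V$ are positive semimartingales the whole situation is symmetric under the interchange $(M,U)\leftrightarrow(N,V)$. It therefore suffices to prove that (ii) implies (i), the reverse implication following verbatim after swapping the roles of $M$ and $N$ and replacing $U$ by $V$. The second basic observation is that the scalar value processes are linked by the same factor: as a pointwise identity, $HM = \sum_i H^i M^i = U\sum_i H^i N^i = U\,(HN)$. Consequently, once (ii) guarantees that $HN = H_0N_0 + H\cdot N$ is a semimartingale, $HM = U\,(HN)$ is automatically a semimartingale as well.

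Assume (ii). First I would check integrability, i.e.\ that $H\in L(N)$ forces $H\in L(M)$, using the integration-by-parts expansion $dM^i = U_-\,dN^i + N^i_-\,dU + d[U,N^i]$: after a localization making $U_-$ and each $N^i_-$ bounded, all three resulting integrals are well defined, so $H^i\in L(M^i)$ for every $i$. With $H\cdot M$ thus meaningful, the same expansion together with associativity of the stochastic integral and the covariation identity $\sum_i\int H^i\,d[U,N^i] = [U,H\cdot N]$ gives
\[
H\cdot M \;=\; \int U_-\,d(H\cdot N) \;+\; \int\left(\sum_i H^i N^i_-\right)dU \;+\; [U,H\cdot N].
\]
On the other hand, integration by parts applied to the product $HM = U\,(HN)$ yields
\[
HM - H_0M_0 \;=\; \int U_-\,d(HN) \;+\; \int (HN)_-\,dU \;+\; [U,HN].
\]
Since $HN - H_0N_0 = H\cdot N$, the first and third terms of the two expansions coincide, so establishing (i) reduces to matching the two middle integrals against $dU$.

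The hard part is precisely this reconciliation, namely showing $\sum_i H^i N^i_- = (HN)_-$. This is where the self-financing hypothesis (ii), and not merely the pointwise product structure, enters, and where one must be careful because the predictable process $H$ need not be c\`adl\`ag, so a naive term-by-term left-limit computation is illegitimate. The clean way is to read the jump of the value process off (ii): since $HN = H_0N_0 + H\cdot N$, the jump rule for stochastic integrals gives $\Delta(HN) = \Delta(H\cdot N) = \sum_i H^i\,\Delta N^i$, and therefore
\[
(HN)_- \;=\; HN - \Delta(HN) \;=\; \sum_i H^i N^i - \sum_i H^i\,\Delta N^i \;=\; \sum_i H^i N^i_-.
\]
This identity makes the discrepancy $\int\big(\sum_i H^i N^i_- - (HN)_-\big)\,dU$ vanish, yielding $H\cdot M = HM - H_0M_0$, which is exactly (i).

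Finally, invoking the $(M,U)\leftrightarrow(N,V)$ symmetry noted at the outset, the identical argument applied with $V$ in place of $U$ shows that (i) implies (ii), completing the equivalence.
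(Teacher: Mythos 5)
Your proof is correct and takes essentially the same route as the paper's: reduce to one implication via the $V\leftrightarrow 1/V$ symmetry, expand by integration by parts, identify the left limit of the value process through the jump rule $\Delta(HN)=\Delta(H\cdot N)=H\Delta N$, and match terms using the covariation identity $[U,H\cdot N]=H\cdot[U,N]$ --- this is the paper's computation in mirror image (the paper proves (i)$\Rightarrow$(ii) with $W=H\cdot M$, $W_-=HM_--H_0M_0$ and $d[W,V]=Hd[M,V]$). One technical caveat: your integrability step asserts the componentwise conclusion $H^i\in L(M^i)$, which does not follow from the vector-integrability hypothesis $H\in L(N)$ (and stopping to make $N^i_-$ bounded does not make each $H^iN^i_-$ integrable against $U$, since only the sum is controlled); this is harmless, however, because your displayed decomposition can be read entirely at the vector level, where $U_-H\in L(N)$ by local boundedness of $U_-$, the sum $\sum_i H^iN^i_-=(HN)_-$ is locally bounded by your jump-rule identity, and $H\cdot[U,N]=[H\cdot N,U]$ exists whenever $H\in L(N)$, which together give $H\in L(M)$ directly.
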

\begin{pf}
($\Rightarrow$) Let $W=H\cdot M$. By (i), $\Delta W=H\Delta M=HM-HM_-$
and $W_-=W-\Delta W=HM_--H_0M_0$. The integration by parts formula
implies that
\begin{eqnarray*}
d(VW)&=&W_-\,dV+V_-\,dW+d[W,V]
\\
&=&(HM_--H_0M_0)\,dV+V_-H\,dM+d[W,V].
\end{eqnarray*}
Since $d[W,V]=H\,d[M,V]$ regrouping terms and using integration by parts
once more we obtain that
\begin{eqnarray*}
d(VW)&=&H\bigl(M_-\,dV+V_-\,dM+d[M,V]\bigr)-H_0M_0\,dV
\\
&=&H\,d(VM)-H_0M_0\,dV.
\end{eqnarray*}
We have that $VM=N$, and hence $d(VW)=H\,dN-H_0M_0\,dV$. By (i),
$VW=HN-VH_0M_0$ and
\begin{eqnarray*}
H\,dN&=&d(VW)+H_0M_0\,dV
\\
&=&\bigl(d(HN)-H_0M_0\,dV\bigr)+H_0M_0\,dV
\\
&=&d(HN)
\end{eqnarray*}
as we wanted to show.\vadjust{\goodbreak}

($\Leftarrow$) The proof of this direction is analogous to the one just
presented since $M$ is obtained after multiplying $N$ by the
nonnegative semimartingale~$\frac{1}{V}$.
\end{pf}
%
\begin{lema}\label{lemmanoarbitragenumeraire}
Suppose that $V$ is a strictly positive $P$-semimartingale. The market
with multi-dimensional price process $ (\frac{1}{V},\frac{S}{V}
)$, where short selling prohibition is imposed on $\frac{S}{V}$,
satisfies the condition of (NA-S) if and only if $V_T-V_0$ is maximal
in $\mathcal{D}$, where $\mathcal{D}$ is the set of random variables of
the form $(H\cdot(S,V))_T$ where $H^1\geq0$, $H^1_0\equiv0$,
$H^2_0\equiv1$ and
\[
\bigl(H^1,H^2-1\bigr)\cdot(S,V)\geq-\alpha V\qquad\mbox{for
some $\alpha> 0$}.
\]
\end{lema}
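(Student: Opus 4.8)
The plan is to prove both implications simultaneously by setting up a change-of-num\'eraire correspondence, in the spirit of the proof of Theorem 11.4.2 in \cite{delbaennumeraire}, between those strategies in $\mathcal{D}$ that dominate $V_T-V_0$ and the arbitrage opportunities in the market $\left(\frac{1}{V},\frac{S}{V}\right)$. The engine of the argument is Lemma \ref{lemmaselffinancing}: a self-financing strategy holds the same number of shares of each risky asset before and after a change of num\'eraire, and its value is merely rescaled by $\frac{1}{V}$. I would begin by noting that $V_T-V_0\in\mathcal{D}$, produced by the buy-and-hold strategy $H=(0,1)$, which satisfies all the defining constraints of $\mathcal{D}$; this makes maximality of $V_T-V_0$ in $\mathcal{D}$ a well-posed assertion.

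The core of the proof is a dictionary between the two markets. Given $H=(H^1,H^2)\in\mathcal{D}$, I would pass to the \emph{excess} strategy $K:=(H^1,H^2-1)$, which starts from zero holdings, keeps $K^1=H^1\geq 0$, and obeys $K\cdot(S,V)\geq-\alpha V$. Embedding $K$ in the three-asset market $N=(S,1,V)$ with the appropriate self-financing cash position and applying Lemma \ref{lemmaselffinancing} to move to $M=\left(\frac{S}{V},\frac{1}{V},1\right)$, the value of $K$ measured in units of $V$ becomes the gains process of an admissible strategy $\phi$ in the num\'eraire-$V$ market; in particular its terminal value is $\frac{(K\cdot(S,V))_T}{V_T}=\frac{g-(V_T-V_0)}{V_T}$, where $g=(H\cdot(S,V))_T$. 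All three constraints transfer cleanly: since the number of shares of $S$ equals the number of shares of $\frac{S}{V}$, the requirement $H^1\geq 0$ is exactly the short-selling prohibition on $\frac{S}{V}$; the admissibility bound $K\cdot(S,V)\geq-\alpha V$ in original units is the bound $\geq-\alpha$ in $V$-units; and $H^1_0=0$, $H^2_0=1$ record that $\phi$ starts with no risky position while holding one unit of the num\'eraire. The construction is reversible, so conversely every admissible strategy in $\left(\frac{1}{V},\frac{S}{V}\right)$ yields an element of $\mathcal{D}$.

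With the dictionary established, the equivalence is immediate. Maximality of $V_T-V_0$ in $\mathcal{D}$ asserts exactly that no $H\in\mathcal{D}$ satisfies $g-(V_T-V_0)\geq 0$ almost surely with $P(g-(V_T-V_0)>0)>0$; dividing by $V_T>0$, this is precisely the absence of an admissible strategy $\phi$ in the market $\left(\frac{1}{V},\frac{S}{V}\right)$ whose gains are nonnegative and strictly positive with positive probability, i.e.\ the condition (NA) for that market. As we are working with (NA) and not (NFLVR), no closure is involved; the only delicate point is that the cone defining (NA) (the analogue of $\mathcal{C}$ in \eqref{setc}) consists of \emph{bounded} random variables dominated by trading gains, so I would pass from the possibly unbounded $\frac{g-(V_T-V_0)}{V_T}$ to the genuine arbitrage $1\wedge\frac{g-(V_T-V_0)}{V_T}$, and in the reverse direction inflate a bounded arbitrage back to a dominating element of $\mathcal{D}$ by multiplying by $V_T$.

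The step I expect to be most delicate is the book-keeping of the change of num\'eraire: correctly embedding the two-asset strategies in the definition of $\mathcal{D}$ into the three-asset market $(S,1,V)$ with the matching self-financing cash account, invoking Lemma \ref{lemmaselffinancing} with the components lined up properly, and then verifying in both directions that admissibility (which rescales by $V$) and the sign constraint (which is invariant because the number of shares is unchanged) are preserved. Once this correspondence is in place, the maximality versus no-arbitrage equivalence drops out with essentially no further work.
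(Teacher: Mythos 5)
Your proposal is correct and follows essentially the same route as the paper's proof: both directions rest on the change-of-num\'eraire Lemma \ref{lemmaselffinancing}, embedding the two-asset strategies into the three-asset markets $(S,1,V)$ and $\left(\frac{S}{V},\frac{1}{V},1\right)$ with a self-financing cash position, so that elements of $\mathcal{D}$ dominating $V_T-V_0$ correspond exactly (after rescaling by $\frac{1}{V}$) to admissible strategies in $\left(\frac{1}{V},\frac{S}{V}\right)$ with nonnegative, nonzero terminal gains. Your additional care in passing between unbounded gains and the bounded cone defining (NA) via the truncation $1\wedge\frac{g-(V_T-V_0)}{V_T}$ is a point the paper leaves implicit, but it does not change the substance of the argument.
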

\begin{pf} ($\Leftarrow$) Let $M= (\frac{1}{V},\frac{S}{V}
)$ and $N=(S,V)$. Suppose that $H=(H^1,H^2)$ is an arbitrage in the
market with multi-dimensional price process $ (\frac{1}{V},\frac
{S}{V} )$. In other words, assume that $H^2\geq0$, $H_0\equiv0$,
$(H\cdot M)_T\geq0$, $P((H\cdot M)_T>0)>0$ and $H\cdot M\geq-\alpha$
for some $\alpha>0$. If we define
\begin{eqnarray*}
H^3&=&1+H\cdot M-HM,
\\
\tilde{M} &=& \biggl(\frac{1}{V},\frac{S}{V},1 \biggr),
\\
\tilde{N}&=&(1,S,V)
\end{eqnarray*}
and
\[
\tilde{H}=\bigl(H^1,H^2,H^3\bigr),
\]
we have that $\tilde{H}\cdot\tilde{M}=\tilde{H}\tilde{M}-1$. By Lemma
\ref{lemmaselffinancing} we have that
\[
\tilde{H}\cdot\tilde{N}=\tilde{H}\tilde{N}-V_0.
\]
But observe that
\[
\tilde{H}\tilde{N}=VHM+(1+H\cdot M-HM)V=(1+H\cdot M)V
\]
and
\[
\tilde{H}\cdot\tilde{N}=K\cdot N,
\]
where $K=(H^2,H^3)$. Hence $(K\cdot N)_T$ is an element of $\mathcal
{D}$ such that $(K\cdot N)_T\geq V_T-V_0$ $P$-almost surely and
$P((K\cdot N)_T>V_T-V_0)>0$, whence $V_T-V_0$ is not maximal in
$\mathcal{D}$.

($\Rightarrow$) Conversely, suppose that $V_T-V_0$ is not maximal in
$\mathcal{D}$. With the notation used above, let $K=(K^1,K^2)$ be a
strategy such that $(K\cdot N)_T\geq V_T-V_0$ $P$-almost surely and
$P((K\cdot N)_T>V_T-V_0)>0$, with $K^1\geq0$, $K^1_0\equiv0$,
$K^2_0\equiv1$ and $(K^1,K^2-1)\cdot N\geq-\alpha V$ for some $\alpha
>0$. Define $H^2=K^1$, $H^3=K^2-1$, $H^1=(H^2,H^3)\cdot N-(H^2,H^3)N$
and $H=(H^1,H^2,H^3)$. We have that $H\cdot\tilde{N}=H\tilde
{N}-H_0\tilde{N}_0$. By Lemma \ref{lemmaselffinancing} we have that
\[
H\cdot\tilde{M}=H\tilde{M}-H_0\tilde{M}_0=H\tilde{M}.
\]
Hence,
\[
\bigl(H^1,H^2\bigr)\cdot M=H\tilde{M}.
\]
We have that
\[
H\tilde{M}=\frac{1}{V}H\tilde{N}=\frac{1}{V} \bigl(
\bigl(H^2,H^3\bigr)\cdot N \bigr)=\frac{1}{V}
\bigl(K\cdot N-(V-V_0)\bigr)\geq-\alpha.
\]
Therefore,
\[
\bigl(\bigl(H^1,H^2\bigr)\cdot M\bigr)_T=
\frac{1}{V_T}\bigl((K\cdot N)_T-(V_T-V_0)
\bigr),
\]
$((H^1,H^2)\cdot M)_T\in L^0_+(P)$ and $P(((H^1,H^2)\cdot M)_T>0)>0$.
Since \mbox{$H^1_0=H^2_0=0$}, $(H^1,H^2)$ is an arbitrage strategy in the
market with multi-dimensional price process $ (\frac{1}{V},\frac
{S}{V} )$.
\end{pf}
%
\begin{rem}
It is important to observe that the no arbitrage condition \mbox{(NA-S)} over
$ (\frac{1}{V},\frac{S}{V} )$ holds for strategies that are
nonnegative on the second component, but can be negative in an
admissible way [see condition (iii) in Definition~\ref
{admissiblestrategies}] over the first component. Lemma \ref
{lemmanoarbitragenumeraire} gives a necessary and sufficient condition
under which the introduction of $V$ as a num\'eraire does not introduce
arbitrage in a market with short sales prohibition. A related
discussion on num\'eraires over convex sets of random variables can be
found in \cite{kostaspositivity}.
\end{rem}
These lemmas allow us to prove Theorem \ref{maximalclaims}.

\subsubsection{Proof of the main theorem}\label{sec4.2.3}

\mbox{}

\begin{pf*}{Proof of Theorem \ref{maximalclaims}}
Since $f$ is bounded from below there exists a constant $x$ such that
$\tilde{f}:=f+x$ is nonnegative. Theorem \ref{replication}, applied to
$\tilde{f}$, proves the equivalence between (ii) and (iii). We will
prove now that (iii) implies (i). The (FTAP) (Theorem \ref
{maintheorem}) shows that (NFLVR-S) holds for the market consisting of
$S$ and $(H\cdot S)$ with short selling prohibition on $S$. Now assume
that $f\leq((H^1,H^2)\cdot(S^1,S^2))_T$ with $((H^1,H^2)\cdot
(S^1,S^2))_T\in\mathcal{B}(H)$. Then
\[
\bigl(H^1-1,H^2\bigr)\cdot\bigl(S^1,S^2
\bigr)\geq-\beta-\alpha S^1
\]
for some $\alpha,\beta>0$ and $((H^1-1,H^2)\cdot(S^1,S^2))_T\geq0$.
Since
\[
\bigl(H^1-1+\alpha,H^2\bigr)\cdot\bigl(S^1,S^2
\bigr)\geq-\beta
\]
by Lemma \ref{lemma2} (extended to the case when the integrand is not
identically 0 at time~0) we conclude that
\[
\bigl(H^1-1+\alpha,H^2\bigr)\cdot\bigl(S^1,S^2
\bigr)
\]
is an $R^*$-supermartingale, which in turn implies that
$((H^1-1,H^2)\cdot(S^1,S^2))$ is an $R^*$-supermartingale starting at
0. Since $((H^1-1,H^2)\cdot(S^1,S^2))_T\geq0$, we conclude that
$((H^1-1,H^2)\cdot(S^1,S^2))_T= 0$ $P$-almost surely. This shows that
$f$ is maximal in $\mathcal{B}(H)$.

Let us prove now that (i) implies (iii). By the (FTAP) we know that
there exists $\tilde{P}\in\mathcal{M}_{\mathrm{sup}}(S)$ such that $(H\cdot S)$
is a $\tilde{P}$-local martingale. Let $a$ be such that $V:=a+(H\cdot
S)$ is positive and bounded away from 0. Since $f$ is maximal in
$\mathcal{B}(H)$, $V_T-V_0$ is maximal in $\mathcal{D}$, where
$\mathcal
{D}$ is as in Lemma \ref{lemmanoarbitragenumeraire}.
By Lemma \ref{lemmanoarbitragenumeraire} (NA-S) holds in the market
where $\frac{S}{V}$ and $\frac{1}{V}$ trade with short selling
prohibition on $\frac{S}{V}$. By Lemma \ref{lemalocalmartingalefactor}
we conclude that (NFLVR-S) holds in this market with respect to the
measure $\tilde{P}$. Hence, by the (FTAP) there exists $\tilde{Q}\sim
\tilde{P}$ (and hence $\tilde{Q}\sim P$) such that $\frac{S}{V}$ is a
$\tilde{Q}$-supermartingale, and $\frac{1}{V}$ is a bounded $\tilde
{Q}$-local martingale and therefore a $\tilde{Q}$-martingale. By
defining $R^*$ by $V_T\,dR^*= (E^{\tilde{Q}} [\frac{1}{V_T}
] )^{-1}\,d\tilde{Q}$, we observe that $R^*\in\mathcal{M}_{\mathrm{sup}}(S)$
and $V$ is an $R^*$-martingale. This implies that $(H\cdot S)$ is an
$R^*$-martingale as well.

Finally to prove that (iii) implies (iv) we observe that if $R\in
\mathcal{M}_{\mathrm{loc}}(S)$ and $(\tau_n)$ is an $R$-localizing sequence for
$(H\cdot S)$ then $(H\cdot S)_{\tau_n\wedge T}=E^{R^*}[f|\mathcal
{F}_{\tau_n\wedge T}]$ is a dominated sequence of random variables with
zero $R$-expectation. By the dominated convergence theorem we conclude
that $E^R[f]=0$, and $(H\cdot S)$ is an $R$-martingale (it is an
$R$-supermartingale with constant expectation).
\end{pf*}

\subsection{Final remarks}\label{sec4.3}
%
\begin{rem}
Condition (i) in Theorem \ref{maximalclaims} can be interpreted as
follows. The market where $S^1$ and $S^2$ trade with short sales
prohibition on $S^2$ satisfies the no arbitrage paradigm of (NFLVR-S).
In this market the strategy of buying and holding $S^1$ cannot be
dominated by any strategy with initial holdings of one share of $S^1$
and none of $S^2$ that does not sell $S^2$ short.
\end{rem}
The following observation is important. It shows that the elements
$f\in L^0(P)$ that satisfy any of the conditions of Theorem \ref
{maximalclaims} are maximal in $\mathcal{K}$. A~related result was
discussed in Remark \ref{remarkmaximaliywithoutconstraints}, where it
was shown that, under stronger assumptions on the replicating strategy
for $f$, a stronger form of maximality holds, namely maximality in
$\tilde{\mathcal{K}}$; see (\ref{admissiblewithnoshortsalesproh}).
%
\begin{prop}
If \textup{(i)}, \textup{(ii)} or \textup{(iii)} in Theorem \ref{maximalclaims} holds, then $f$ is
maximal in $\mathcal{K}$.
\end{prop}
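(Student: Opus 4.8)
The plan is to reduce immediately to condition (iii), which is legitimate since Theorem \ref{maximalclaims} establishes that (i), (ii) and (iii) are equivalent. Thus I may assume there is a strategy $H\in\mathcal{A}$ and a measure $R^*\in\mathcal{M}_{sup}(S)$ with $f=(H\cdot S)_T$ and $(H\cdot S)$ an $R^*$-martingale. The membership $f\in\mathcal{K}$ is then automatic from the definition \eqref{setk}, so the only thing to verify is the maximality clause: if $g\in\mathcal{K}$ satisfies $g\geq f$ $P$-almost surely, then $g=f$ $P$-almost surely. The whole argument rests on comparing the behavior of arbitrary elements of $\mathcal{K}$ against $f$ under the single fixed measure $R^*$.

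First I would record the two expectation facts. Since $(H\cdot S)$ is an $R^*$-martingale starting at $0$, I have $E^{R^*}[f]=E^{R^*}[(H\cdot S)_T]=0$. Next, writing a competitor as $g=(H'\cdot S)_T$ with $H'\in\mathcal{A}$, I invoke Lemma \ref{lemma2}, which tells me that $(H'\cdot S)$ is an $R^*$-supermartingale starting at $0$; in particular $(H'\cdot S)_T\in L^1(R^*)$ and
\[
E^{R^*}[g]=E^{R^*}[(H'\cdot S)_T]\leq 0.
\]
This is the conceptual crux: every payoff attainable from zero endowment has nonpositive $R^*$-expectation, whereas $f$ attains expectation exactly $0$ under $R^*$, so $f$ sits at the extreme allowed by this measure.

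Finally I would combine these facts with the ordering. Because $R^*\sim P$, the hypothesis $g\geq f$ holds $R^*$-almost surely as well, giving $E^{R^*}[g]\geq E^{R^*}[f]=0$. Together with $E^{R^*}[g]\leq 0$ this forces $E^{R^*}[g]=E^{R^*}[f]$, hence $E^{R^*}[g-f]=0$ with $g-f\geq 0$ $R^*$-almost surely, so $g-f=0$ $R^*$-almost surely and therefore $P$-almost surely by equivalence. This yields $g=f$ $P$-almost surely, establishing maximality in $\mathcal{K}$. I do not expect a genuine obstacle here; the only point requiring care is ensuring integrability of $g$ under $R^*$ so that the expectations are well defined, which is precisely what Lemma \ref{lemma2} supplies, and the passage between $R^*$- and $P$-almost sure statements, which is immediate from $R^*\sim P$.
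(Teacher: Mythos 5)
Your proof is correct and follows essentially the same route as the paper: both arguments reduce to the fact that $E^{R^*}[f]=0$ for some $R^*\in\mathcal{M}_{sup}(S)$, then apply Lemma \ref{lemma2} to the competitor $g=(K\cdot S)_T$ to get $E^{R^*}[g]\leq 0$, and conclude $g=f$ from $g\geq f$ and equality of expectations. The only cosmetic difference is that you route through condition (iii) via the equivalences of Theorem \ref{maximalclaims}, whereas the paper works directly from the common consequence $E^{R^*}[f]=0$ shared by (i), (ii) and (iii).
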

\begin{pf}
Assume that $E^{R^*}[f]=0$ for some $R^*\in\mathcal{M}_{\mathrm{sup}}(S)$. If
$f\leq(K\cdot S)_T$ with $K\in\mathcal{A}$, by Lemma \ref{lemma2}, we
conclude that $E^{R^*}[(K\cdot S)_T]=0$. This implies that $f=(K\cdot
S)_T$ $P$-almost surely and $f$ is maximal in $\mathcal{K}$.
\end{pf}
As shown in \cite{delbaencounterexample}, without the assumption
that $f$ is bounded, (iv) of Theorem~\ref{maximalclaims} is not a
necessary condition. Theorem \ref{maximalclaims} is useful to argue why
certain types of contingent claims in certain financial models cannot
be replicated by using a strategy that is maximal in the sense of (i)
of Theorem \ref{maximalclaims} above.
%
\begin{example}\label{exampleputoption}
Let $K\in(0,\infty)$ be fixed. Assume that $S$ is a continuous
$P$-martingale, $[S,S]_T$ is deterministic and $P(S_T<K,\tau<T)>0$ where
\[
\tau=\inf\bigl\{t\leq T\dvtx S_t\geq K+\tfrac{1}{2}
\bigl([S,S]_T-[S,S]_t \bigr) \bigr\}\wedge T.
\]
By Novikov's criterion (Theorem III-45 in \cite{protter}) and by
Girsanov's theorem (Theorem III-40 in \cite{protter}) we know that
\[
\frac{dQ}{dP}=\mathcal{E} \biggl(-\int_0^T1_{[\tau,T]}(s)
\,dS_s \biggr)
\]
defines a probability measure $Q\in\mathcal{M}_{\mathrm{sup}}(S)$. If
$g\dvtx [0,\infty)\rightarrow[0,\infty)$ is a function that vanishes on
$[K,\infty)$ and is strictly positive on $[0,K)$, then
%
\begin{eqnarray}\label{inequalityexampleputoption}
E^Q\bigl[g(S_T)\bigr]&=&E^Q
\bigl[g(S_T)1_{\{S_T<K\}}\bigr]
\nonumber
\\
&\geq& E^P\bigl[1_{\{\tau=T\}}g(S_T)1_{\{S_T<K\}}
\bigr]
\nonumber\\[-8pt]\\[-8pt]
&&{} +E^P \bigl[1_{\{S_T<K,\tau<T\}}g(S_T)\exp
\bigl(-(S_T-K)\bigr) \bigr]
\nonumber
\\
&>&E^P\bigl[g(S_T)\bigr].\nonumber
\end{eqnarray}
If we further assume that $g$ is bounded, then by Theorem \ref
{maximalclaims} [condition (iv)] we conclude that $g(S_T)$ does not
belong to $\mathcal{G}$ as in (\ref{setg1}). Indeed, if
$g(S_T)=x+(H\cdot S)_T$, with $x\in\mathbb{R}$, $H\in\mathcal{A}$ and
$(H\cdot S)$ an $R^*$-martingale for some $R^*\in\mathcal{M}_{\mathrm{sup}}(S)$,
then by Theorem \ref{maximalclaims}, $(H\cdot S)$ would be an
$R$-martingale for all $R\in\mathcal{M}_{\mathrm{loc}}(S)$. In particular, we
would have that $E^P[g(S_T)]=x=E^Q[g(S_T)]$, which contradicts~(\ref
{inequalityexampleputoption}).
The function $g(x)=(K-x)_+$ satisfies the above mentioned conditions.
Hence under these assumptions, the put option's payoff does not belong
to $\mathcal{G}$. This example is similar to Example 7.2 of \cite
{cvitanickaratzas}.
\end{example}
%
\begin{rem}
In Example 5.7.4 in \cite{KS} and Section 8.1 in \cite{cvitanic}, it is
proven that for diffusion models with constant coefficients and
stochastic volatility models with additional properties, respectively,
the minimum super-replication price of an European put option, $\sup
_{Q\in\mathcal{M}_{\mathrm{sup}}(S)}E^Q[(K-S_T)_+]$,
is equal to $K$. In particular if $P(S_T\neq0)>0$, then this supremum
is never attained and $(K-S_T)_+$ is not in $\mathcal{G}$ as defined by
(\ref{setg1}).
\end{rem}
In this section we have studied the space of contingent claims that can
be super-replicated and perfectly replicated with martingale strategies
in a market with short sales prohibition. We extended results found in
\cite{anselstricker,KS} and \cite{schiedfollmer} to the short
sales prohibition case. We additionally have extended the results in
\cite{delbaennumeraire} to our framework and modified the concept of
maximality accordingly (see Theorem \ref{maximalclaims}). Additionally,
we presented explicit payoffs in general markets that cannot be
replicated without selling the spot price process short.

\section{Open questions}\label{sec5}
It is still unclear whether (NFLVR) for a market without short sales
prohibition, implies that all claims that are maximal in the sense of
(i) in Theorem \ref{maximalclaims} are maximal in
$\tilde{\mathcal{K}}$; see (\ref{admissiblewithnoshortsalesproh}).
Equivalently, it is unclear whether for a claim $f$ that is bounded
from below, the conditions $\mathcal{M}_{\mathrm{loc}}(S)\neq\varnothing$ and
\[
\sup_{Q\in\mathcal{M}_{\mathrm{sup}}(S)}E^Q[f]=E^{R^*}[f]
\]
for some
$R^*\in\mathcal{M}_{\mathrm{sup}}(S)$, imply that there exists
$P^*\in\mathcal{M}_{\mathrm{loc}}(S)$ such that $E^{P^*}[f]=E^{R^*}[f]$. Also,
it would be interesting to obtain a characterization of the set of
claims that are maximal in $\mathcal{K}$ [as in (\ref{setk})] and
explore whether maximality in $\mathcal{K}$ implies maximality in
$\tilde{\mathcal{K}}$; see (\ref{admissiblewithnoshortsalesproh}).

\section*{Acknowledgments}

This paper comprises a large part of the author's Ph.D.
thesis written under the direction of Philip Protter and Robert Jarrow
at Cornell University. The author wishes to thank them for the help
they provided during his graduate career. Also, the author would like
to thank Kasper Larsen, Martin Larsson, Alexandre Roch and Johannes Ruf
for their comments and insights into this work. Special thanks go to an
anonymous referee for pertinent remarks and corrections on earlier
versions of the manuscript.



\printaddresses

\end{document}